\documentclass[journal]{IEEEtran}

\usepackage{amsmath}
\usepackage{amssymb}
\usepackage{bigfoot}

\newtheorem{theorem}{Theorem}
\newtheorem{lemma}{Lemma}

\usepackage{graphicx}
\graphicspath{{generated/figs/}{generated/cases/}}

\usepackage{booktabs}
\usepackage{multirow}
\usepackage{siunitx}

\usepackage{xcolor}
\usepackage{dblfloatfix}
\newcommand\blfootnote[1]{%
  \begingroup
  \renewcommand\thefootnote{}\footnote{#1}%
  \addtocounter{footnote}{-1}%
  \endgroup
}
\usepackage{subcaption}
\usepackage{placeins}
\usepackage{cuted}
\usepackage{enumitem}

\usepackage{algorithm}
\usepackage{algorithmic}

\usepackage{url}
\usepackage{cite}

\begin{document}

\title{FABO: Agent-Guided Discovery of Joint Breakpoint Optimization for Timing-Driven Routing Trees}

% --- PLACEHOLDER author block: fill in real names/affiliations before submission ---
\author{Shang Liu, Wenji Fang,~\IEEEmembership{Graduate Student Member,~IEEE}, Jing Wang, Hongxin Kong, Yao Lu, \\ Zhiyao Xie,~\IEEEmembership{Member,~IEEE}\vspace{-.2in}}

\maketitle

% --- Abstract ---
\begin{abstract}
The topology of a routing tree determines how a multi-pin net branches and shares physical wire, directly affecting wirelength, congestion, capacitance, and delay.
We study a central early-stage routing problem: minimizing wirelength while bounding the root-to-sink path stretch for every sink.
SALT is the state-of-the-art constructive algorithm for this problem.
We ask whether language-model-guided search can discover a constructive algorithm that improves on SALT.
To make this search reliable, we develop an agent framework that combines parallel exploration with independent checking.
Applied to SALT, the framework discovers a structural limitation: SALT repairs one sink path at a time and therefore never jointly decides where paths sharing root-side wire should split.
This sink-local choice can split the paths too early and duplicate wire.
This discovery leads to \textbf{Flow-Aware Breakpoint Optimization (FABO)}, which jointly optimizes breakpoints across root-to-sink paths that share wire while preserving every sink's stretch budget.
Across 1.29 million ICCAD15 nets and SALT's 20-point stretch-tolerance schedule, FABO reduces average FLUTE-normalized wirelength at every setting, with peak same-$\epsilon$ reductions of $0.83\%$ overall and $2.66\%$ for nets with at least 30 pins.
With $1.3\times$ SALT's runtime, FABO-FAST identifies and optimizes most nets for which FABO provides a substantial wirelength reduction.
Code is available at \url{https://github.com/DevinShang/routing-FABO}.
\end{abstract}

\begin{IEEEkeywords}
Steiner tree, routing tree, timing-driven routing, breakpoint sharing, agent-guided algorithm discovery
\end{IEEEkeywords}

\blfootnote{\quad\ Shang Liu, Wenji Fang, Jing Wang, Yao Lu and Zhiyao Xie are with the Department of Electronic and Computer Engineering at the Hong Kong University of Science and Technology, Hong Kong SAR, China. {Corresponding Author: Zhiyao Xie (eezhiyao@ust.hk).}}
\blfootnote{\quad\  Hongxin Kong is with Synopsys, United States.}
% --- Section bodies (verbatim from acmart, order preserved) ---
\section{Introduction}

Before detailed routing, a routing tree decides where the root-to-sink paths of a multi-pin net branch and which paths share physical wire.
These decisions directly affect total wirelength, routing demand, interconnect capacitance, congestion, and root-to-sink delay~\cite{alpert2006practicallyfree,chu2008flute}.
The effect is strongest on high-fanout nets: one root-side segment may serve many sinks, so splitting those paths early can repeat substantial wire across the resulting branches.

\begin{figure}[t]
  \centering
  \includegraphics[width=\columnwidth]{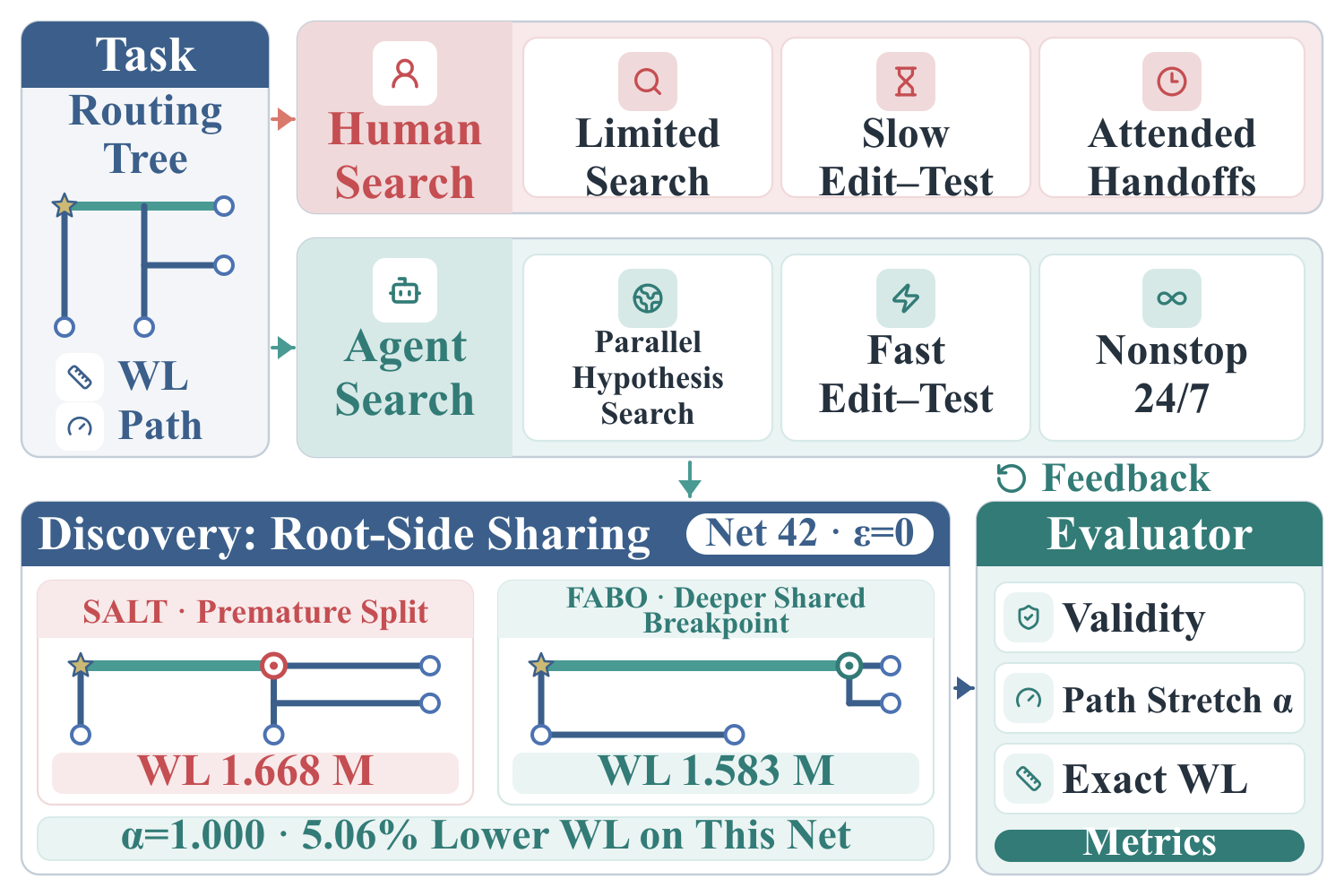}
  \caption{\textbf{Agent-guided discovery and its routing outcome.}
  The upper panel contrasts the search capabilities of the human and agent workflows.
  The lower panel shows the recorded superblue16 net 42 case at $\epsilon=0$: SALT and FABO both achieve $\alpha=1.000$, while FABO moves the shared breakpoint deeper and reduces wirelength by $5.06\%$.}
  \label{fig:intro_contribution}
\end{figure}

FLUTE rapidly constructs near-minimum-wirelength rectilinear Steiner trees but does not bound the path detour of each sink~\cite{flute_iccad04,chu2008flute}.
SALT~\cite{salt_iccad17} instead minimizes wirelength subject to a hard normalized-stretch bound for every sink.
This is a fundamental early-design formulation: its geometry-only constraint applies before detailed resistance, capacitance, or sign-off timing is available, yet it protects every sink individually.
SALT is the state-of-the-art constructive algorithm for this problem.
The key question is whether a new algorithm can reduce wirelength beyond SALT under the same per-sink budgets, giving later physical-design stages a lighter routing structure without weakening any path guarantee.

PD-Rev balances wirelength and path detour in the Prim--Dijkstra family~\cite{pdrev_ispd18,alpert1995primdijkstra}; PatLabor~\cite{chen2025patlabor} constructs Pareto sets for wirelength and maximum path length; and Wu et al.~\cite{wu2025delaydriven} optimize Elmore delay under a wirelength requirement.
These objectives are complementary, but none is a direct replacement for SALT under the per-sink normalized-stretch constraint in Section~\ref{sec:problem_setup}.
% Our central question is: can a new constructive algorithm beat SALT on the problem that SALT defines? 

\textbf{From agent search to a new algorithm.}
Recent program-search systems show that language-model-guided search can discover new mathematical and heuristic programs~\cite{funsearch_nature24,liu2024eoh,novikov2025alphaevolve,yu_npcomplete_2025}.
Existing EDA agents mainly orchestrate existing tools~\cite{he2023chateda,ghose2025orfsagent,pasandi2025jarvis}, while algorithm-configuration methods tune parameters of a fixed procedure~\cite{paramils_jair09,smac_lion11,irace_orp16}.
We instead build an agent-guided research system to develop and validate a new routing-tree construction algorithm.

A serial agent has two weaknesses in this NP-hard search.
It follows one reasoning trajectory, so after adopting a construction it tends to patch that construction instead of exploring a fundamentally different one.
It must also carry each idea through specification, implementation, verification, and evaluation; an unchecked error at any stage then shapes every later decision.
Our framework addresses both weaknesses through multi-agent generation and checking.
Eight isolated Hypothesis Agents explore different structural mechanisms, the Critic checks their assumptions and converts the strongest directions into fixed specifications, and three Programmer--Verifier chains implement and independently verify those specifications in parallel.
The fixed Evaluator admits results from valid trees into the SSOT for the next round.

\textbf{Flow-Aware Breakpoint Optimization (FABO).}
Per-sink budgets constrain root-to-sink path lengths; the objective minimizes total wirelength.
A longer feasible path can reduce wirelength when it lets several sinks share more wire.
A shared segment disappears from the tree only after every path using it is rerouted, so this saving must be evaluated jointly across those paths.
SALT builds a light support tree and repairs each sink whose root path exceeds its budget: it opens a \emph{breakpoint} on that path, keeps the wire from the breakpoint to the sink, and reconnects the breakpoint to the root through a shortest wire.
The repair is greedy and per sink, and the postprocessing that follows moves one node at a time; sharing recovers only as a side effect, and the joint choice of where a group of paths should stop sharing one trunk is never represented or searched.
The unexploited sharing grows with fanout, where many paths share one long root-side trunk.

In FABO, a \emph{flow} is one root-to-sink path in the support tree.
Instead of committing a flow to one repair point, FABO represents its breakpoint choices as one \emph{feasible breakpoint interval}.
The interval exactly characterizes all budget-feasible breakpoints.
Moving a breakpoint deeper shortens the kept wire at least as much as it can lengthen the new root wire; therefore, the feasible breakpoints form one contiguous interval ending at the sink.
Where intervals overlap, their flows can adopt one \emph{shared breakpoint} without violating any sink budget, and FABO counts a support edge as released only when every flow using it leaves.
Joint options that sink-by-sink repair never exposes thus become explicit and comparable, while every candidate tree is still checked against every sink budget.

We evaluate FABO on 1.29 million ICCAD15 nets with at least three pins under SALT's 20-point $\epsilon$ schedule.
FABO reduces average FLUTE-normalized wirelength at every setting, with peak same-$\epsilon$ reductions of $0.83\%$ overall and $2.66\%$ among nets with at least 30 pins. Our contributions are:
\begin{itemize}
\item \textbf{A multi-agent framework for reliable algorithm discovery.} Parallel Hypothesis Agents explore different construction families. At each stage, other agents check the hypotheses, specifications, implementations, and evaluation outputs before the evidence enters the next round. 
\item \textbf{Discovery of a structural bottleneck.} SALT repairs one sink path at a time, so it never chooses a breakpoint jointly for paths that share the same root-side wire. Its final tree can therefore split these paths earlier than necessary and duplicate wire.
\item \textbf{Flow-Aware Breakpoint Optimization.} FABO changes the decision unit from one sink to the group of paths sharing root-side wire. It represents each path's legal breakpoints as one interval and turns joint repair into selecting shared points from interval overlaps, making group-level wirelength gains explicit while preserving every sink's path budget.
\item \textbf{Quality at scale with a practical runtime option.} Across 1.29 million nets and all 20 SALT tolerances, FABO reduces average FLUTE-normalized wirelength at every setting, with peak reductions of $0.83\%$ overall and $2.66\%$ for nets with at least 30 pins. With $1.3\times$ SALT's runtime, FABO-FAST identifies and optimizes most nets for which FABO provides a substantial wirelength reduction.
\end{itemize}

\section{Problem Formulation}
\label{sec:problem_setup}

\begin{figure*}[t]
  \centering
  \includegraphics[width=0.9\textwidth]{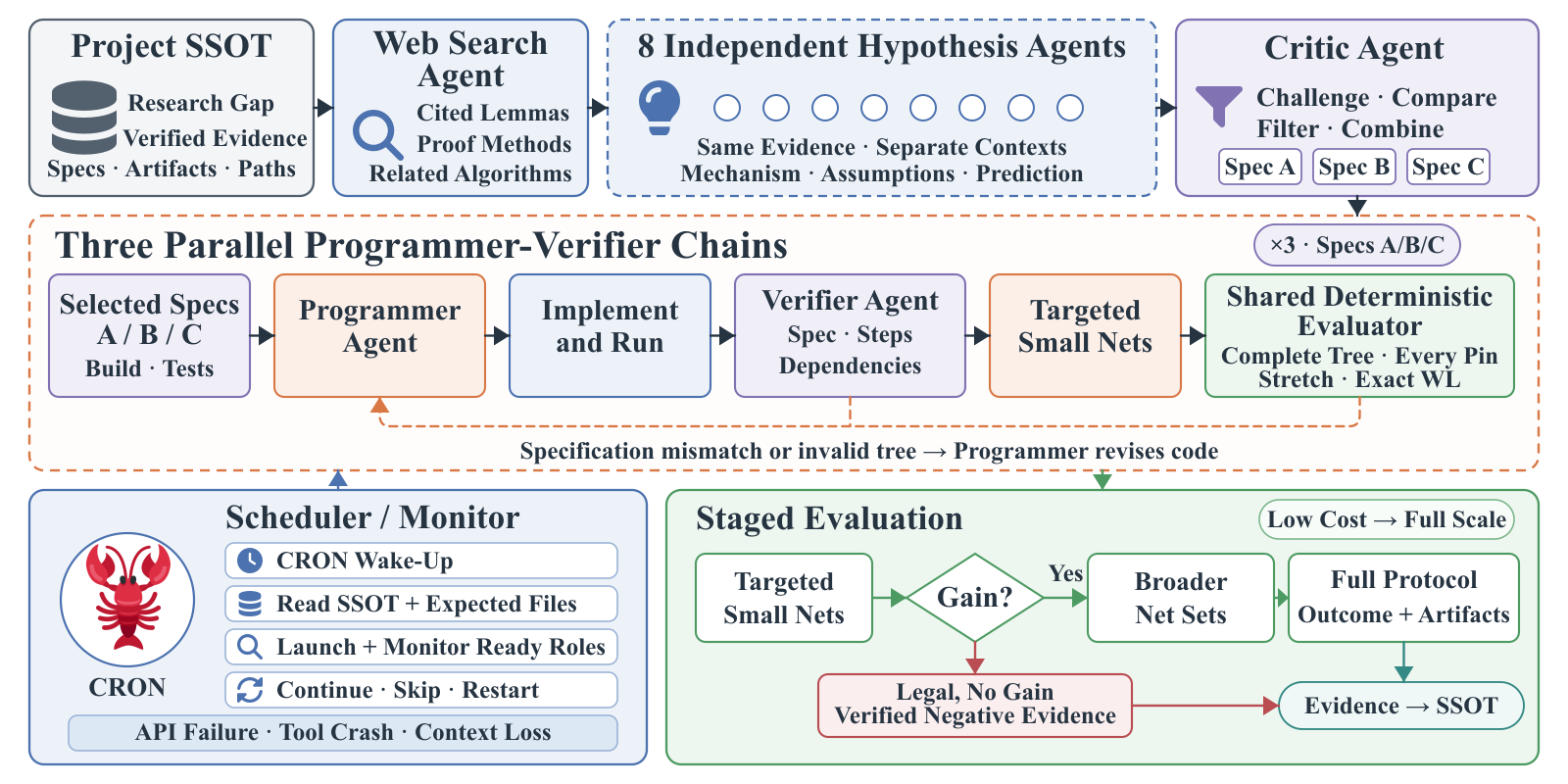}
  \caption{\textbf{Multi-agent generation and checking in each search round.}
  Eight isolated Hypothesis Agents explore different structural mechanisms from the same verified evidence, and the Critic checks their assumptions before producing three fixed specs.
  Three Programmer--Verifier chains implement and independently check the specs in parallel. The Evaluator verifies the resulting trees, the SSOT records the evidence, and the Scheduler/Monitor keeps the file-based workflow running.}
  \label{fig:agent_closed_loop_concept}
\end{figure*}

We adopt the rectilinear Steiner shallow-light formulation of SALT~\cite{salt_iccad17}.
The input consists of a root pin $r$ and a sink set $P=\{t_1,\dots,t_m\}$ on the Manhattan plane.
A feasible solution is a rooted rectilinear Steiner tree $T$ spanning the root pin $r$ and every sink pin $t\in P$.
The pin coordinates are fixed, while the routing tree $T$ may introduce Steiner points.

\textbf{Path distance and shallowness:}
The Manhattan distance $d_M(u,v)$ is the length of a shortest rectilinear path between points $u$ and $v$.
The tree-path distance $d_T(r,t)$ is the length of the unique path from the root pin $r$ to a sink pin $t$ in the routing tree $T$.
The shallowness $\alpha(T)$ is the largest normalized root-to-sink path length:
\begin{equation}
\alpha(T)=\max_{t\in P}\frac{d_T(r,t)}{d_M(r,t)}.
\end{equation}

\textbf{Wirelength and normalization:}
The wirelength $WL(T)$ is the exact length of the geometric union of all rectilinear segments in the routing tree $T$; coincident segments are counted once.
Let $T_{\mathrm{FLUTE}}$ denote the FLUTE tree for the same root pin $r$ and sink set $P$.
Following SALT's evaluator, we report the FLUTE-normalized wirelength
\begin{equation}
\beta(T)=\frac{WL(T)}{WL(T_{\mathrm{FLUTE}})}.
\end{equation}

\textbf{Normalized delay:}
The experiments also report the SALT normalized-delay metric $\gamma(T)$~\cite{salt_iccad17}.
Let $D_T(r,t)$ be the Elmore delay from the root pin $r$ to a sink pin $t$ under the unit-length resistance and capacitance model.
Following SALT's evaluator, let $D_{\mathrm{LB}}(r,t)$ denote SALT's lower-bound delay for sink pin $t$, and define the per-net lower bound as $D_{\mathrm{LB}}(r,P)=\max_{t\in P}D_{\mathrm{LB}}(r,t)$.
Then
\begin{equation}
\gamma(T)=\frac{\max_{t\in P}D_T(r,t)}{D_{\mathrm{LB}}(r,P)}.
\end{equation}
The normalized delay $\gamma(T)$ is a reporting metric rather than an optimization constraint in this work.

\textbf{Stretch-constrained objective:}
Let the stretch tolerance $\epsilon\ge 0$ specify the allowed multiplicative detour beyond a Manhattan shortest path.
For each sink pin $t$, define the per-sink path budget $B_t$ by
\begin{equation}
B_t=(1+\epsilon)d_M(r,t).
\label{eq:path_budget}
\end{equation}
The rectilinear shallow-light routing problem is
\begin{equation}
\min_T \; WL(T)
\quad
\text{s.t.}
\quad
d_T(r,t)\le B_t, \quad \forall t\in P.
\label{eq:salt_formulation}
\end{equation}
Because $WL(T_{\mathrm{FLUTE}})$ is fixed for a given net, minimizing $WL(T)$ is equivalent to minimizing $\beta(T)$.
At stretch tolerance $\epsilon=0$, every root-to-sink path must be Manhattan-shortest, giving the rectilinear minimum-arborescence limit.
As the stretch tolerance $\epsilon$ increases, the path constraint relaxes and the minimum-wirelength feasible tree approaches a rectilinear Steiner minimum tree~\cite{salt_iccad17}.

% =====================================================
% 3.1  THE ALGORITHM-DISCOVERY FRAMEWORK
% =====================================================
\section{Agent-Guided Algorithm-Discovery Framework}
\label{sec:discovery_framework}

\textbf{Why a serial single-agent workflow is fragile.}
A serial agent explores one reasoning trajectory.
Once it commits to a construction, later iterations tend to refine that construction instead of proposing a fundamentally different routing structure.
This path dependence can trap the search in a local optimum.
Algorithm discovery also spans several dependent stages: hypothesis generation, specification, implementation, verification, and evaluation.
The output of each stage becomes the input of the next.
A weak hypothesis, incomplete specification, coding error, or invalid evaluation therefore distorts every later decision.

\textbf{Core design: generate alternatives and check every stage.}
Our framework uses multi-agent collaboration throughout the workflow.
At the hypothesis stage, eight agents explore different structural mechanisms and a Critic challenges their assumptions before selecting three specifications.
At the execution stage, three Programmer--Verifier chains implement and independently check the specifications in parallel.
A fixed Evaluator then checks complete routing trees, and the SSOT stores the verified evidence for the next round.
This repeated generate--check pattern broadens the structural search and improves the quality of every handoff.

\textbf{One search round.}
Each round starts from the single source of truth (SSOT), which records verified conclusions and links to their code and evaluation artifacts.
The Web Search Agent retrieves mathematical tools for the unresolved structural question.
Eight Hypothesis Agents receive the same SSOT and references in isolated contexts and propose alternative mechanisms with explicit assumptions and measurable predictions.
The Critic checks the proposals against the routing constraints and existing evidence, then converts the strongest directions into three New Idea Specs.
Three Programmer--Verifier chains implement and check these specs in parallel.
The fixed Evaluator verifies each complete routing tree and every path budget before measuring wirelength.
The SSOT records the verified result and its evidence before the next round starts.

\begin{figure*}[t]
  \centering
  \includegraphics[width=\textwidth]{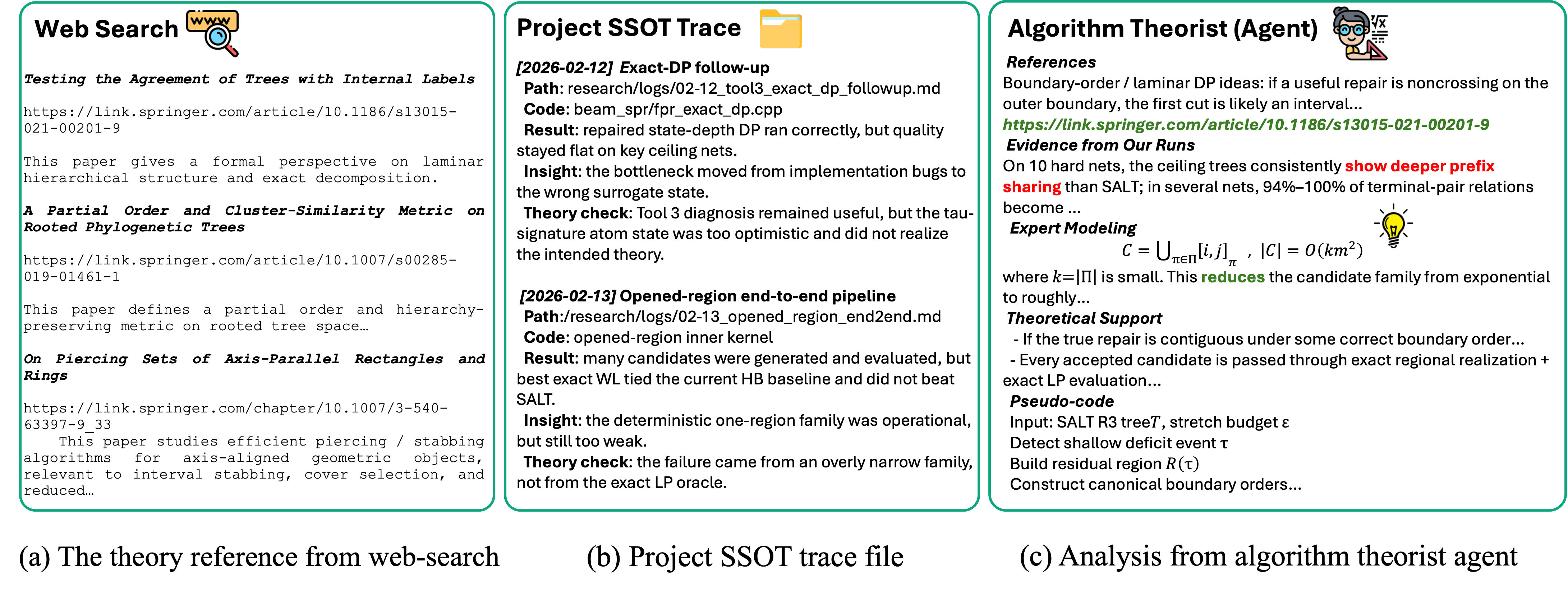}
  \caption{\textbf{Evidence flow through one search round.}
  (a) The Web Search Agent retrieves cited tools on rooted-tree structure and geometric piercing~\cite{fernandezbaca2021agreement,hendriksen2020partial,segal1997piercing}.
  (b) The SSOT records two candidate evaluations with their code paths, measured outcomes, extracted lessons, and theory diagnoses.
  (c) A Hypothesis Agent combines the retrieved tools with the recorded evidence and writes a model, assumptions, theoretical support, and pseudocode.
  The Critic converts this material into a fixed spec for implementation and verification.}
  \label{fig:agent_content_demo}
\end{figure*}

\subsection{Multi-Agent Generation and Checking}
\label{sec:framework_roles}

\textbf{Structural exploration.}
The SSOT defines the unresolved structural question, and the Web Search Agent retrieves relevant lemmas, proof techniques, and algorithms.
Eight Hypothesis Agents independently combine this technical note with the same verified evidence.
Each agent proposes a structural mechanism, states its assumptions, and gives a measurable prediction.
The isolated contexts produce several search directions instead of eight refinements of one shared draft.
The Critic then checks every assumption against the problem constraints and evidence, rejects inconsistent proposals, completes missing construction steps, and combines compatible mechanisms.
It returns three self-contained New Idea Specs with fixed constructions and acceptance tests.

\textbf{Parallel implementation and verification.}
Three Programmer--Verifier chains execute the specs concurrently.
In each chain, the Programmer implements one spec and the Verifier independently checks the implementation, dependencies, and required tests against that spec.
This check prevents implementation errors from being interpreted as algorithmic evidence.
For example, in one SALT reproduction~\cite{salt_iccad17}, empty stubs allowed the binary to run after a missing dependency disabled required postprocessing.
The Verifier classified the run as an implementation failure and returned it for repair.

\textbf{Evaluation and evidence update.}
Each verified implementation first runs on targeted small nets, then on broader net sets, and finally under the full evaluation protocol.
The fixed Evaluator checks the complete output tree, verifies every root-to-sink path budget, and measures exact wirelength.
Thus the framework separately establishes that the code matches the spec, the output is valid, and the valid output improves the objective.
The SSOT accepts the result after these checks and supplies it to the next search round.
\subsection{How Results Update the Next Round}
\label{sec:framework_cycle}
\label{sec:framework_evidence}

\textbf{Result classification.}
The framework assigns every run one of three labels.
An \emph{implementation failure} means that the code or its dependencies differ from the spec.
A \emph{construction failure} means that a spec-conformant implementation produces an invalid routing tree.
A \emph{verified negative result} means that the implementation matches the spec and produces a valid tree, but fails to reduce wirelength.
Algorithm comparison uses evaluator measurements from valid trees.

\textbf{Evidence record.}
Each run produces a code snapshot, build log, output tree, metric file, and verification report.
The SSOT entry stores the result label, a short evidence statement, and paths to these artifacts.
The next round reads this record as its starting evidence.

\textbf{From failed candidates to FABO.}
Figure~\ref{fig:agent_content_demo}(b) shows two implemented candidate families that produced no gain on the targeted nets.
Their diagnoses shifted the next round from local attachment changes to the placement of shared root-side structure.
Panel (c) turns this question into explicit assumptions and pseudocode.
This sequence led to the flow-level breakpoint formulation in Section~\ref{sec:fabo}.

\subsection{Persistent State and Recovery}
\label{sec:framework_recovery}

\textbf{File-based execution.}
Each agent reads its inputs from files and writes its outputs to files.
An external Scheduler/Monitor polls these files at fixed intervals.
It launches an agent when its inputs are ready, restarts an interrupted agent from its latest artifacts, and tracks the three execution chains separately.
\begin{figure*}[t]
\centering
\includegraphics[width=\textwidth]{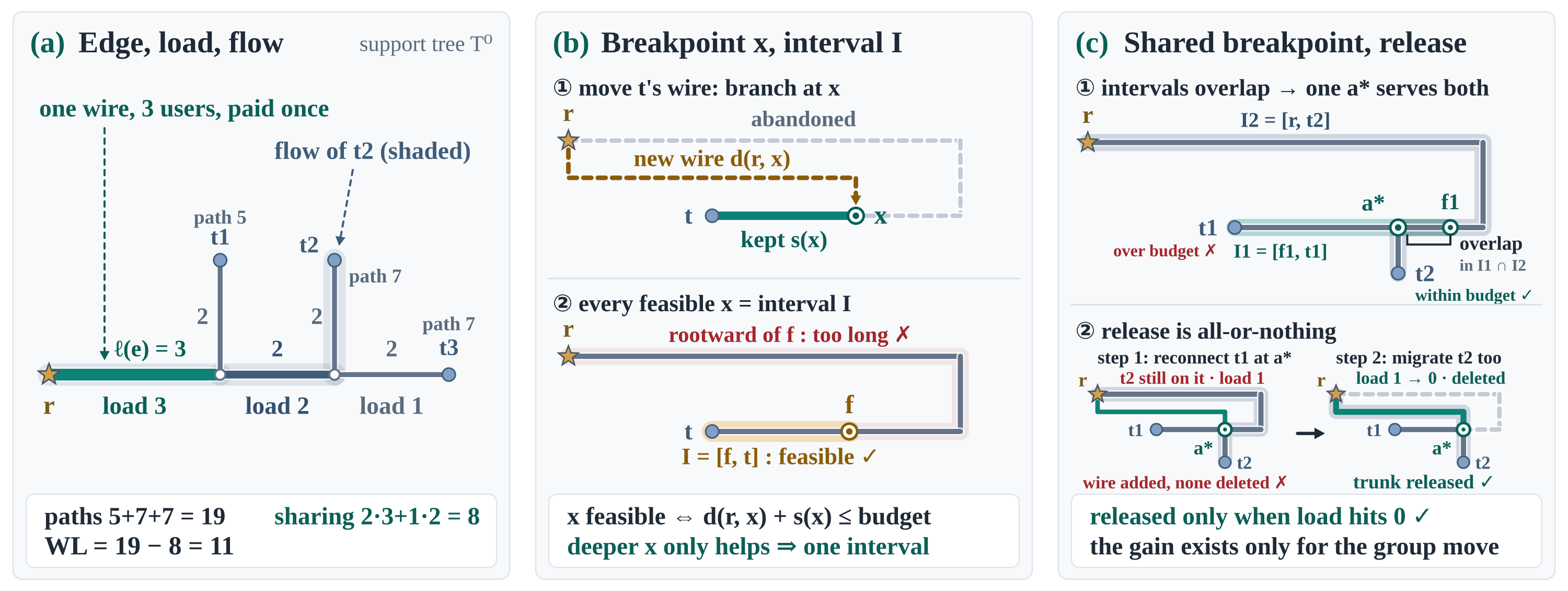}
\caption{
(a) An edge is paid once, no matter how many flows use it: wirelength is total path length minus sharing, Eq.~\eqref{eq:fabo_wl_identity}.
(b) A breakpoint $x$ keeps the wire from $x$ to the sink and adds a new shortest wire from the root to $x$; all feasible $x$ form the interval $I=[f,t]$.
(c) One point $a^\ast$ in the overlap of two intervals reroutes both flows; the abandoned trunk is deleted (dashed) only in step 2, when its load reaches zero.}
\label{fig:fabo_concepts}
\end{figure*}

\begin{figure*}[t]
\centering
\includegraphics[width=\textwidth]{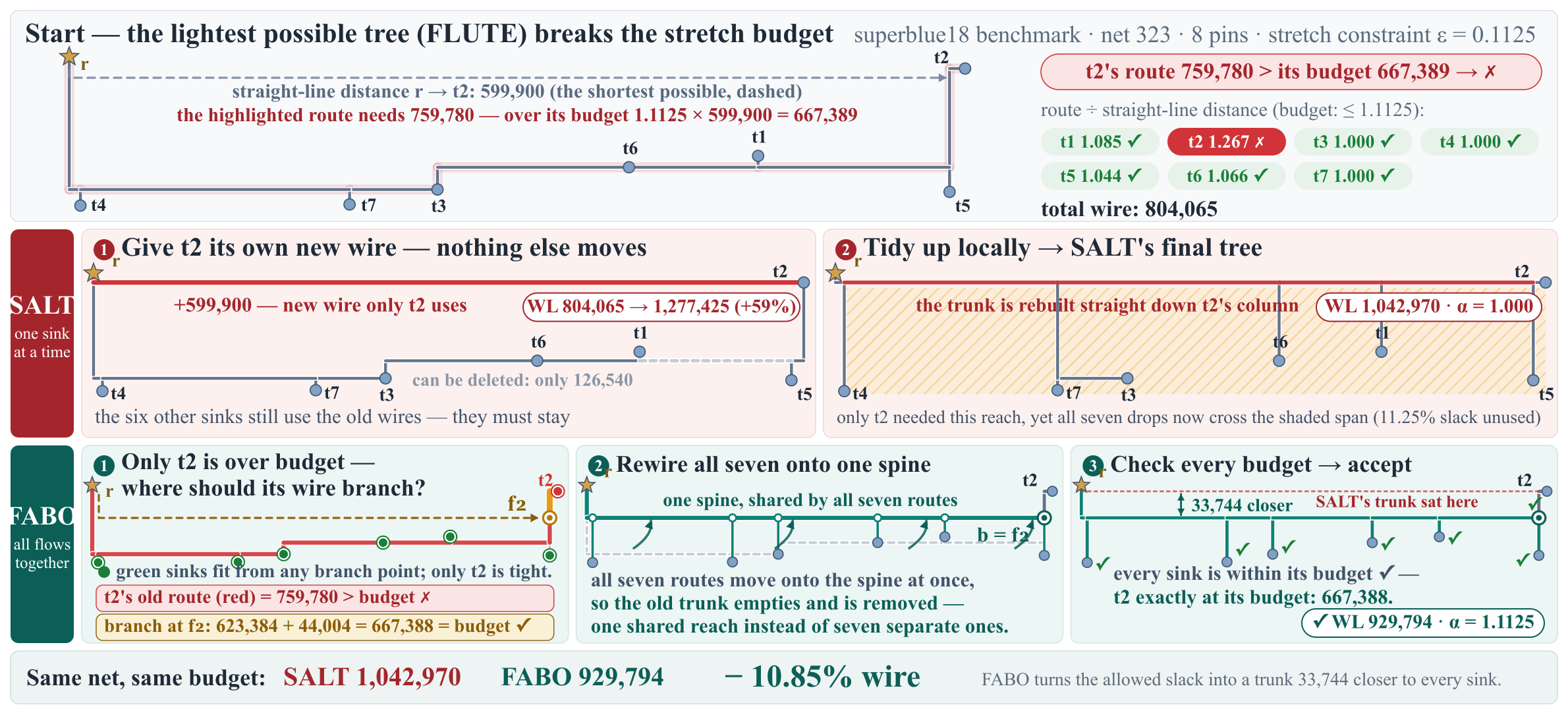}
\caption{\textbf{One recorded net, two repairs (superblue18 net 323, eight pins, $\epsilon=0.1125$).}
SALT gives the only violating sink $t_2$ a private wire and then rebuilds the trunk on $t_2$'s column, using none of the allowed $11.25\%$ stretch.
FABO instead places the shared trunk at $f_2$, rewires all seven flows onto it, and verifies every budget.
Under the same budgets, FABO reduces wirelength from $1{,}042{,}970$ to $929{,}794$ ($10.85\%$).}
\label{fig:algo_demo}
\end{figure*}

\section{FABO: Flow-Aware Breakpoint Optimization}
\label{sec:fabo}

\textbf{Core idea.}
When a sink's path exceeds its budget, SALT builds a new wire for that one sink and leaves every other path unchanged~\cite{salt_iccad17}.
The total wirelength, however, is decided by how much wire the paths \emph{share}.
FABO therefore plans the new wires so that several paths can use them together: for every sink it computes every legal position for the new wire's branch (always one contiguous interval); it lets one branch serve several sinks wherever their intervals overlap.
Figure~\ref{fig:fabo_concepts} shows some important definitions on a toy net.

\textbf{Wirelength is paths minus sharing (Figure~\ref{fig:fabo_concepts}(a)).}
An \emph{edge} $e$ is one wire segment of length $\ell(e)$, and its \emph{load} $\mathrm{load}_T(e)\ge1$ is the number of root-to-sink paths through it.
We call each such path a \emph{flow}: the route one sink currently uses to reach the root.
An edge contributes its length to the wirelength only once, no matter how many flows it carries, so every routing tree satisfies
\begin{equation}
WL(T)=\sum_{t\in P}d_T(r,t)
\;-\;\sum_{e\in T}\bigl(\mathrm{load}_T(e)-1\bigr)\,\ell(e).
\label{eq:fabo_wl_identity}
\end{equation}
In the example, three flows of lengths $5{+}7{+}7=19$ overlap on $8$ units of wire, leaving $WL=11$.
The budget constraint $d_T(r,t)\le B_t$ from Section~\ref{sec:problem_setup} only puts an upper limit on each path length in the first sum; a path may be longer than necessary, as long as it stays under its budget.
FABO searches for reroutings that spend some of the available path-length budget to increase sharing and reduce wirelength: the additional sharing must save more wire than the increase in total root-to-sink path length.

\textbf{Breakpoints and intervals (Figure~\ref{fig:fabo_concepts}(b)).}
When a path goes over budget, SALT connects the detected breakpoint to the root through a new shortest wire. In the running example below, that breakpoint is the sink itself, so the sink receives a private wire~\cite{salt_iccad17}.
FABO turns this fixed repair into a choice: the new wire may stop at any point $x$ of the flow, called a \emph{breakpoint}---the wire from $x$ to the sink is kept (the \emph{suffix}, of length $s(x)$), the wire from the root to $x$ is abandoned, and a new shortest wire of length $d_M(r,x)$ is built from the root to $x$.
The rerouted flow is at most $d_M(r,x)+s(x)$ long, so $x$ is \emph{feasible} when $d_M(r,x)+s(x)\le B_t$.
Moving $x$ toward the sink shortens the suffix by exactly the distance moved and lengthens the new wire by at most that distance, so this bound never increases, and the feasible breakpoints form one interval $I=[f,t]$ ending at the sink (Lemma~\ref{lem:fabo_interval}).
Its \emph{boundary} $f$ is the feasible point closest to the root ($f=r$ when the whole flow is within budget); for an over-budget flow, $f$ lies where the path turns back toward the root, because only there does moving the breakpoint toward the sink shorten both the new root connection and the retained suffix.

\textbf{Shared breakpoints and all-or-nothing release (Figure~\ref{fig:fabo_concepts}(c)).}
A breakpoint $a^\ast$ in the overlap of $I_1$ and $I_2$ is feasible for both flows, allowing them to share one new root wire.
Moving only flow 1 does not release the old trunk because flow 2 still uses it; the trunk is removed only after flow 2 also moves. FABO therefore counts an edge as saved only when all of its users leave---the all-or-nothing, per-edge account that makes the method \emph{flow-aware}.

\textbf{From these definitions to the algorithm.}
FABO computes every flow's interval (Section~\ref{sec:fabo_ranges}); selects a small set of breakpoints so that every interval contains one, sharing wherever intervals overlap (Section~\ref{sec:fabo_selection}); proposes deeper breakpoints ranked by the wire their group releases (Section~\ref{sec:fabo_accounting}); and accepts a candidate only after the complete realized tree passes every budget and improves the incumbent (Section~\ref{sec:fabo_full_tree}).
The joint decision SALT never represents---where a group of flows should stop sharing one trunk---thereby becomes an explicit decision that FABO searches directly.

\textbf{Running example.}
Figure~\ref{fig:algo_demo} traces one recorded net---superblue18 net 323, eight pins, $\epsilon=0.1125$---through both algorithms: the top strip shows the violating support tree, SALT steps~1--2 show the private repair and local cleanup, and FABO steps~1--3 show boundary selection, joint rewiring, and final budget validation.

\subsection{From Sink-Local Repair to Shared Breakpoints}
\label{sec:fabo_motivation}

\textbf{SALT's decision.}
SALT traverses a light support tree and, whenever one sink path exceeds its budget $B_t$, connects the point where the violation is detected---its \emph{breakpoint}---to the root through a new shortest wire; the wire from the breakpoint to the sink is kept, and fixed postprocessing passes then clean the complete tree~\cite{salt_iccad17}.
Every choice in this loop is triggered by, and made for, one sink.

\textbf{SALT's two-step repair (Figure~\ref{fig:algo_demo}, SALT lane).}
In the top strip of Figure~\ref{fig:algo_demo}, the support tree has wirelength $804{,}065$ and exactly one violation: the root-to-sink path of $t_2$ has length $759{,}780$, above its budget $B_{t_2}=1.1125\times599{,}900=667{,}388.75$.
SALT's repair (SALT lane, step~1) gives $t_2$ a private shortest wire of length $599{,}900$.
The other flows still need the old \emph{trunk}---the long root-side run of wire that every flow shares---so only $126{,}540$ of the old wire comes free, and wirelength jumps to $1{,}277{,}425$.
Postprocessing (step~2) then consolidates the tree onto the private wire the repair just added, rebuilding the trunk straight down $t_2$'s column: SALT's final tree has wirelength $1{,}042{,}970$ at maximum stretch $\alpha=1.000$.

\textbf{The missed decision.}
SALT's final tree uses none of the permitted $11.25\%$ stretch: the trunk is pinned to $t_2$'s column, and every \emph{drop}---the wire from the trunk to one sink---crosses the full span between that column and its sink.
The question SALT never asks is where the seven flows, as a group, should stop sharing one trunk.
Reaching the shorter tree requires moving the trunk and every drop together.
Per-sink repair never represents such a change, and SALT's postprocessing does not find it here: its local moves re-attach flows onto existing wire, and the deeper trunk does not yet exist.
FABO asks exactly this question.
In the FABO lane of Figure~\ref{fig:algo_demo}, step~2 rewires all seven flows onto a trunk $33{,}744$ closer to the six other sinks; $t_2$ spends its permitted detour, while no other path gets longer. Step~3 verifies every budget and reports wirelength $929{,}794$---a $10.85\%$ reduction under the same constraints.
In the terms of~\eqref{eq:fabo_wl_identity}, FABO spends $67{,}488$ more total path length---all of it $t_2$'s permitted detour---to save $180{,}664$ more through sharing; the difference is exactly the $113{,}176$ of wire removed.
Figure~\ref{fig:fabo_high_gain_cases} shows the same premature root-side splitting on eight high-degree nets; Section~\ref{sec:results} shows that FABO saves more wire on higher-fanout nets and on nets where SALT shares less root-side wire.

\begin{figure*}[!t]
\centering
\includegraphics[width=\textwidth]{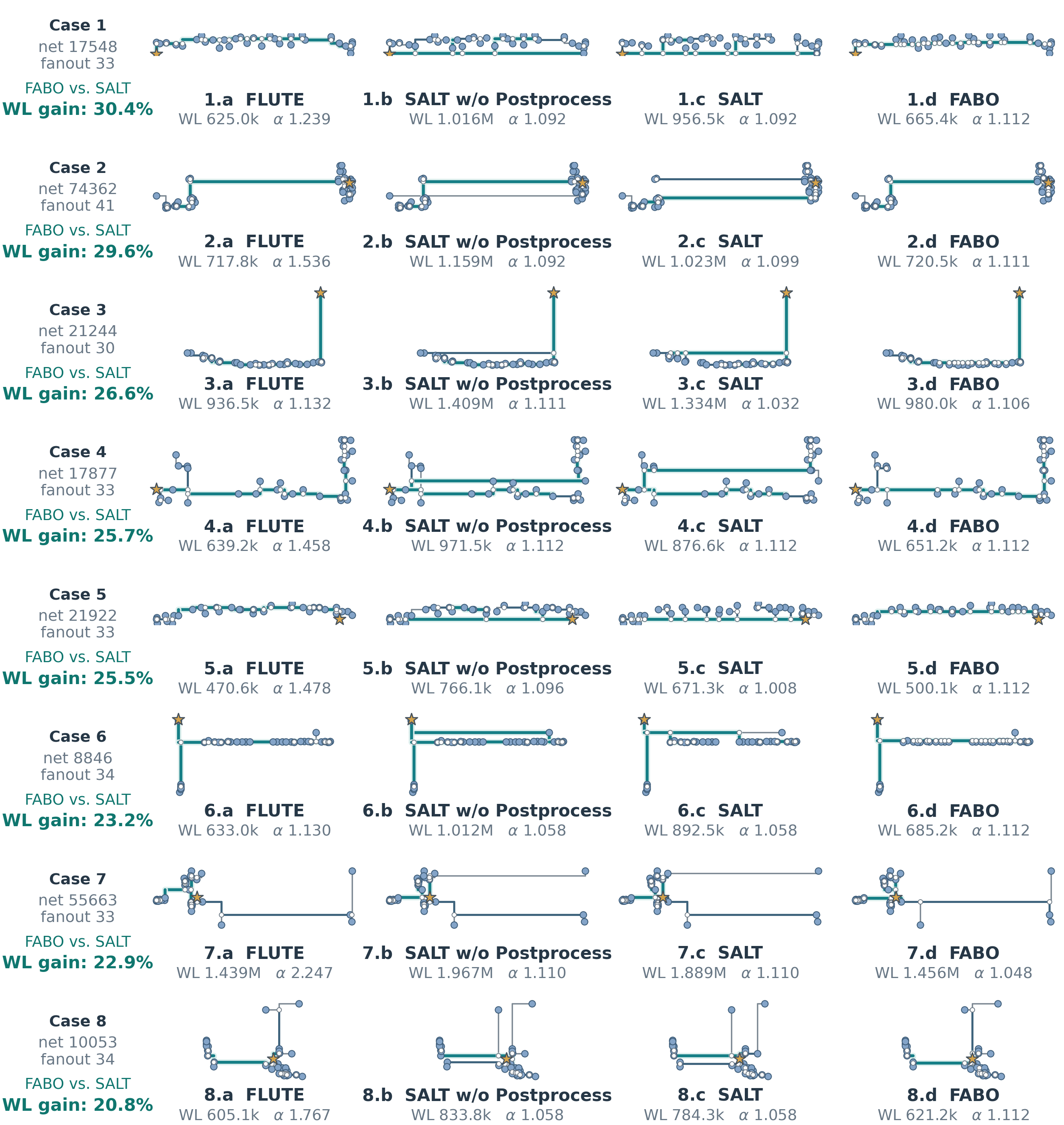}
\caption{\textbf{Eight high-degree examples of premature root-side splitting.}
All cases use $\epsilon=0.1125$.
The four columns show the FLUTE support, SALT before postprocessing, postprocessed SALT, and FABO.
The gold star marks the root pin $r$, filled blue circles mark sink pins, and hollow circles mark Steiner branch points.
Edge color indicates how many root-to-sink paths share each edge.
Panels in each row use the same plotting window and equal axis scale.
Each gain is FABO's wirelength reduction relative to final SALT on that row.}
\label{fig:fabo_high_gain_cases}
\end{figure*}

\subsection{Feasible Breakpoint Intervals}
\label{sec:fabo_ranges}

\textbf{One interval per flow.}
Let the light support tree $T^0$ (FLUTE~\cite{flute_iccad04} in our implementation) be rooted at the root pin $r$, and let $T^0[u,v]$ denote its unique path between points $u$ and $v$.
Placing the breakpoint of sink $t$ at a point $x\in T^0[r,t]$ means connecting $r$ to $x$ by a shortest rectilinear wire and keeping the support suffix from $x$ to $t$, as in Figure~\ref{fig:fabo_concepts}(b).
The new root-to-sink route then has length
\begin{equation}
s_t(x)=d_{T^0}(x,t),
\qquad
\phi_t(x)=d_M(r,x)+s_t(x).
\label{eq:fabo_phi}
\end{equation}
A point $x$ is \emph{feasible} for sink $t$ when $\phi_t(x)\le B_t$.
Here $\phi_t$ measures the route before any coincident wire merges; the realized tree path can only be shorter (Theorem~\ref{thm:fabo_feasible}), so feasibility under $\phi_t$ is a safe criterion.
Along $T^0[r,t]$, \emph{rootward} means closer to $r$ and \emph{deeper} means farther from $r$.

Moving $x$ rootward lengthens the kept suffix by exactly the distance moved, while the Manhattan connection $d_M(r,x)$ can shrink by at most that distance---and may even grow---so $\phi_t$ never decreases rootward.
The feasible points therefore form one suffix that starts at the rootward-most feasible point $f_t$:
\begin{equation}
I_t=\{x\in T^0[r,t]:\phi_t(x)\le B_t\}
    =T^0[f_t,t].
\label{eq:fabo_interval}
\end{equation}
We call $I_t$ the \emph{feasible breakpoint interval} of sink $t$ and of its flow.
If the whole support path is feasible, then $f_t=r$.
If $f_t$ falls inside an edge, FABO inserts it as a vertex, changing neither geometry nor wirelength.

\begin{lemma}[Feasible-interval property]
\label{lem:fabo_interval}
For every sink $t$, the feasible set in~\eqref{eq:fabo_interval} is a nonempty connected suffix of the support path $T^0[r,t]$.
\end{lemma}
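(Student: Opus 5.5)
We parametrize the support path $T^0[r,t]$ by arc length from the sink: for $\lambda\in[0,L]$ with $L=d_{T^0}(r,t)$, let $x(\lambda)$ be the unique point of $T^0[r,t]$ with $s_t(x(\lambda))=d_{T^0}(x(\lambda),t)=\lambda$. Then $\phi_t(x(\lambda))=d_M(r,x(\lambda))+\lambda$. The whole lemma reduces to showing that this function of $\lambda$ is nondecreasing and that it satisfies the budget at $\lambda=0$.

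The key step is monotonicity. For $0\le\lambda<\lambda'\le L$, the subpath of $T^0[r,t]$ between $x(\lambda)$ and $x(\lambda')$ is a rectilinear curve of length $\lambda'-\lambda$. Hence $d_M(x(\lambda),x(\lambda'))\le\lambda'-\lambda$, since Manhattan distance is at most the length of any rectilinear (indeed any) path. The triangle inequality for $d_M$ then gives
\begin{equation*}
d_M(r,x(\lambda))\le d_M(r,x(\lambda'))+d_M(x(\lambda'),x(\lambda))\le d_M(r,x(\lambda'))+\lambda'-\lambda,
\end{equation*}
so $\phi_t(x(\lambda))\le\phi_t(x(\lambda'))$. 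Thus $\phi_t$ is nondecreasing as the point moves rootward, which is exactly the informal claim stated before~\eqref{eq:fabo_interval}. Moreover $\lambda\mapsto\phi_t(x(\lambda))$ is continuous (indeed $2$-Lipschitz), because $x(\lambda)$ moves at unit speed along the tree and $d_M(r,\cdot)$ is $1$-Lipschitz.

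Nonemptiness and the suffix structure follow directly. At $\lambda=0$ we have $x=t$ and $\phi_t(t)=d_M(r,t)\le(1+\epsilon)d_M(r,t)=B_t$ because $\epsilon\ge0$, so $t\in I_t$. By monotonicity, the set $\{\lambda:\phi_t(x(\lambda))\le B_t\}$ is downward closed in $[0,L]$; it is therefore an interval $[0,\lambda^\ast)$ or $[0,\lambda^\ast]$. Continuity makes it closed, so it equals $[0,\lambda^\ast]$ with $\lambda^\ast=\max\{\lambda:\phi_t(x(\lambda))\le B_t\}$. Setting $f_t=x(\lambda^\ast)$ gives $I_t=T^0[f_t,t]$, a connected suffix. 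When $\lambda^\ast=L$, we get $f_t=r$.

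The only delicate point is making the arc-length parametrization rigorous on a geometric tree, where points lie inside edges, and justifying continuity so that the maximum $\lambda^\ast$ is attained. I would handle this by treating $T^0[r,t]$ as a finite concatenation of axis-parallel segments, so that $x(\lambda)$ is piecewise linear. I do not expect any other obstacle.
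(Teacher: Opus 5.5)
Your proposal is correct and follows the paper's argument. Moving rootward lengthens the suffix by exactly the distance moved, while $d_M(r,\cdot)$ can shrink by at most that distance (triangle inequality together with $d_M\le d_{T^0}$), and the sink is feasible because $\epsilon\ge 0$. Your continuity step, which ensures the rootward-most feasible point $f_t$ is attained so that the feasible set is the closed path $T^0[f_t,t]$, is a small gain in rigor that the paper leaves implicit.
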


\noindent\emph{Proof.}
Let $x'$ be rootward of $x$ on $T^0[r,t]$, and let $\Delta=d_{T^0}(x',x)$.
Since $x$ lies on $T^0[x',t]$, the suffix satisfies $s_t(x')=s_t(x)+\Delta$.
The triangle inequality gives $d_M(r,x')\ge d_M(r,x)-\Delta$.
Adding both, $\phi_t(x')\ge\phi_t(x)$: moving deeper never increases the path length.
The sink itself is feasible because $\phi_t(t)=d_M(r,t)\le B_t$, so the feasible set is exactly one nonempty suffix ending at $t$. \hfill$\square$

The lemma makes the interval exact for this criterion: $I_t$ contains precisely the breakpoints whose guaranteed route meets the budget.
The FABO lane of Figure~\ref{fig:algo_demo}, step~1, shows the resulting interval layout.
Every flow except $t_2$'s is feasible at every point of its support path ($f_t=r$; green sinks).
For $t_2$, the boundary $f_2$ satisfies $\phi_{t_2}(f_2)=623{,}384+44{,}004=667{,}388\le B_{t_2}$, leaving essentially no slack.

\subsection{Shared Breakpoint Selection}
\label{sec:fabo_selection}

\textbf{Select for the group, not for one sink.}
A breakpoint $a$ can serve sink $t$ exactly when $a\in I_t$.
If one point lies in several intervals, the corresponding flows can share one wire from the root to that point and separate only afterward, as in Figure~\ref{fig:fabo_concepts}(c).
FABO therefore selects a breakpoint set $A$ that intersects every interval,
\begin{equation}
A\cap I_t\ne\emptyset,
\qquad \forall t\in P,
\label{eq:fabo_cover}
\end{equation}
and assigns each sink the deepest selected point in its interval, keeping the suffix from that point:
\begin{equation}
a_t=\underset{a\in A\cap I_t}{\arg\max}\;\delta_0(a),
\qquad
S_t=T^0[a_t,t],
\label{eq:fabo_assignment}
\end{equation}
where $\delta_0(x)=d_{T^0}(r,x)$ is the support depth of point $x$.
The maximum in~\eqref{eq:fabo_assignment} is unique because all points of $A\cap I_t$ lie on the single path $T^0[r,t]$.

\textbf{Feasible initialization.}
FABO initializes $A$ greedily: it processes the \emph{proper} intervals---those with $f_t\ne r$---in nonincreasing depth $\delta_0(f_t)$ and adds the boundary $f_t$ whenever the current set misses $I_t$; it then adds the root, which covers every remaining interval.

\begin{theorem}[Minimum initial breakpoint set]
\label{thm:fabo_cover}
The boundaries added by the greedy pass form a minimum-size set among all sets that intersect every interval with $f_t\ne r$.
\end{theorem}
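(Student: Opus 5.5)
This is interval piercing on a rooted tree, so I would use the classical exchange argument for greedy interval stabbing, with tree ancestry playing the role of the line order. Each proper interval $I_t=T^0[f_t,t]$ is a vertical path in the rooted support tree $T^0$: it runs from $f_t$ down to its descendant $t$.

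\textbf{Step 1: a point that sees the whole interval.} For a proper interval, every point of $I_t$ is a descendant of $f_t$ (inclusive). So any point stabbing $I_t$ lies in the subtree rooted at $f_t$. Conversely, $f_t$ stabs every interval $I_s$ with $f_s$ an ancestor of $f_t$ and $f_t$ an ancestor of $s$.

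\textbf{Step 2: a lower bound from pairwise disjoint intervals.} Let $F=\{f_{t_1},\dots,f_{t_k}\}$ be the boundaries added by the greedy pass, in the order they were added. Then $\delta_0(f_{t_1})\ge\cdots\ge\delta_0(f_{t_k})$. I would show that the intervals $I_{t_1},\dots,I_{t_k}$ are pairwise disjoint. Suppose $i<j$ and $I_{t_i}\cap I_{t_j}\ne\emptyset$. The two intervals are vertical paths that meet, so their tops are comparable: one of $f_{t_i},f_{t_j}$ is an ancestor of the other. Because $\delta_0(f_{t_i})\ge\delta_0(f_{t_j})$, the point $f_{t_i}$ is a descendant of $f_{t_j}$ (inclusive). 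Take any common point $y$. It is a descendant of $f_{t_i}$ and lies on $T^0[f_{t_j},t_j]$, so $f_{t_i}$ lies on the path from $f_{t_j}$ to $y$, which is part of $I_{t_j}$. Hence $f_{t_i}\in I_{t_j}$. Then when $t_j$ was processed, the current set already met $I_{t_j}$, so $f_{t_j}$ would not have been added. This contradicts $f_{t_j}\in F$. With the intervals disjoint, any set intersecting every proper interval needs $k$ distinct points, one in each $I_{t_i}$. Therefore $|F|\le$ the optimum.

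\textbf{Step 3: feasibility.} The set $F$ meets every proper interval, because each proper interval is either already hit when processed or has its boundary added. Together with Step 2, $F$ is optimal. The root added afterwards is outside the claim, since it covers only the intervals with $f_t=r$.

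\textbf{Main obstacle.} The delicate point is the geometric claim in Step 2, that two intersecting vertical paths with comparable tops put the deeper top inside the shallower interval. Ties in depth, where the two boundaries are different points at equal depth, have to be handled. In that case neither boundary is an ancestor of the other, so the paths cannot meet, and disjointness holds automatically. Boundaries inserted in the interior of edges, as the paper allows, are treated as ordinary vertices.
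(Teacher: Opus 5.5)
Your proof is correct and follows essentially the same route as the paper: both show that the intervals which triggered an addition are pairwise disjoint, because two such intersecting intervals have tops on a common root path, so the deeper, earlier-added boundary lies inside the later interval and would already have covered it. Your Step~3 also states the feasibility of the greedy set, which gives the matching upper bound; the paper leaves this implicit.
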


\noindent\emph{Proof.}
Call the interval that caused a point to be added the point's \emph{witness}; the greedy set has one point per witness.
It suffices to show that witnesses are pairwise disjoint, because any valid set must place a distinct point inside each interval of a disjoint family.
Suppose witnesses $I_i$ and $I_j$ share a point $x$, where $I_i$ was processed first, so $\delta_0(f_i)\ge\delta_0(f_j)$.
Because $x\in I_i$, the boundary $f_i$ is rootward of $x$, hence $f_i\in T^0[r,x]$; likewise $f_j\in T^0[r,x]$.
Points on one root path are totally ordered by depth, so $f_i$ lies on $T^0[f_j,x]$, which is contained in $I_j$.
Thus $f_i$ covered $I_j$ before $I_j$ was processed, contradicting that $I_j$ is a witness. \hfill$\square$

In the FABO lane of Figure~\ref{fig:algo_demo}, step~1 shows that only $t_2$'s interval excludes the root, so the greedy cover is $A=\{r,f_2\}$: sink $t_2$ is assigned $a_{t_2}=f_2$ and keeps its $44{,}004$ suffix, while every other flow is assigned the root and keeps its complete support path.
The cover thus involves no overlap decision; its contribution is placement.
$f_2$ keeps the longest suffix allowed by $t_2$'s budget. On this net, the same point also places the shared trunk as close as possible to the other six sinks without violating that budget.
Section~\ref{sec:fabo_full_tree} shows how the other flows consolidate onto this wire; overlap drives the selection when several flows are tight at once, where one boundary covers many intervals and Theorem~\ref{thm:fabo_cover} keeps the count minimum.

\textbf{Why selection does not stop here.}
The minimum cover is a guaranteed-feasible starting point, not the wirelength optimum: fewer breakpoints may shorten the new root connections, while deeper breakpoints may release more support wire.
FABO resolves this trade-off with the flow-aware account introduced next, and decides acceptance only on complete trees (Section~\ref{sec:fabo_full_tree}).

\subsection{Flow-Aware Refinement}
\label{sec:fabo_accounting}

\textbf{Count physical edges, not paths.}
FABO refines a breakpoint set $A$ by proposing one deeper feasible point $a$ at a time.
Under $A\cup\{a\}$, every sink is reassigned by the deepest-point rule of~\eqref{eq:fabo_assignment}:
\begin{equation}
a'_t(a)=
\underset{b\in(A\cup\{a\})\cap I_t}{\arg\max}\;\delta_0(b),
\qquad
S'_t(a)=T^0[a'_t(a),t].
\label{eq:fabo_proposed_suffix}
\end{equation}
The value of a proposal is the support wire it frees.
Summing every flow's shortened suffix would count one shared edge many times, so FABO accounts per edge.
For a retained support edge $e$, define its \emph{edge-user set}
\begin{equation}
Q_e=\{t\in P:e\in S_t\}
\label{eq:fabo_edge_users}
\end{equation}
and its \emph{leaving-flow set} under proposal $a$,
\begin{equation}
Y_e(a)=\{t\in P:e\in S_t\setminus S'_t(a)\}.
\label{eq:fabo_edge_movers}
\end{equation}
The edge is released exactly when every flow using it leaves (Figure~\ref{fig:fabo_concepts}(c)):
\begin{equation}
Y_e(a)=Q_e.
\label{eq:fabo_release_condition}
\end{equation}

\begin{lemma}[Exact release condition]
\label{lem:fabo_release}
After all flows are reassigned by~\eqref{eq:fabo_proposed_suffix}, a currently retained support edge $e$ belongs to no retained suffix if and only if~\eqref{eq:fabo_release_condition} holds.
\end{lemma}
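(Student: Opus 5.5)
The plan is to unfold the definitions and reduce the statement to a set identity. A currently retained support edge $e$ is, by definition, an edge of some current suffix, so $Q_e\neq\emptyset$. After reassignment, $e$ belongs to no retained suffix exactly when $e\notin S'_t(a)$ for every $t\in P$. I will first show that this set $\{t: e\in S'_t(a)\}$ of post-proposal users is contained in $Q_e$. Once that is in hand, "no post-proposal user" is the same as "every current user leaves", which is $Y_e(a)=Q_e$.

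The key monotonicity step is to show $S'_t(a)\subseteq S_t$ for every sink $t$.
\begin{itemize}
\item Since $A\subseteq A\cup\{a\}$, the candidate set $(A\cup\{a\})\cap I_t$ contains $A\cap I_t$.
\item Hence the deepest-point rule~\eqref{eq:fabo_proposed_suffix} picks a point $a'_t(a)$ with $\delta_0(a'_t(a))\ge\delta_0(a_t)$.
\item Both $a_t$ and $a'_t(a)$ lie on the single path $T^0[r,t]$, which is totally ordered by depth. So $a'_t(a)$ lies on $T^0[a_t,t]$.
\item Therefore $S'_t(a)=T^0[a'_t(a),t]\subseteq T^0[a_t,t]=S_t$.
\end{itemize}
The same total-order remark already made the $\arg\max$ in~\eqref{eq:fabo_assignment} unique, and I reuse it here.

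With this inclusion the two directions are short.
\begin{itemize}
\item \emph{If $Y_e(a)=Q_e$.} For $t\notin Q_e$ we have $e\notin S_t\supseteq S'_t(a)$. For $t\in Q_e=Y_e(a)$, the definition~\eqref{eq:fabo_edge_movers} gives $e\notin S'_t(a)$. So no retained suffix contains $e$.
\item \emph{If no retained suffix contains $e$.} Every $t\in Q_e$ satisfies $e\in S_t$ and $e\notin S'_t(a)$, so $t\in Y_e(a)$. Since $Y_e(a)\subseteq Q_e$ by definition, equality follows.
\end{itemize}

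The main obstacle is not algebraic but interpretive. Edges are treated as whole segments, while a suffix may start in the interior of an edge. I will handle this with the convention from Section~\ref{sec:fabo_ranges} that boundary points are inserted as vertices. Under that convention every breakpoint is a vertex, suffixes are unions of whole edges, and "$e\in S_t$" is unambiguous. I also note that the argument uses only the inclusion $S'_t\subseteq S_t$; without it, an edge could gain a new user and the equivalence would fail.
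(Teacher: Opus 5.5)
Your proof is correct and follows the paper's argument. Both rest on the observation that reassignment only moves breakpoints deeper, so $S'_t(a)\subseteq S_t$ and no flow gains edges; survival of $e$ is then equivalent to some $t\in Q_e$ still using it, i.e., to $Y_e(a)\ne Q_e$. You spell out the depth-monotonicity step and the vertex-insertion convention more carefully than the paper does, but the route is the same.
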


\noindent\emph{Proof.}
Reassignment only moves breakpoints deeper, so $S'_t(a)\subseteq S_t$ and no flow gains edges; every flow using $e$ afterward already belongs to $Q_e$.
Edge $e$ survives exactly when some flow $t\in Q_e$ still uses it, that is, when $e\in S'_t(a)$ for some $t\in Q_e$---equivalently, when $Y_e(a)\ne Q_e$. \hfill$\square$

The released support length is therefore
\begin{equation}
G(a)=
\sum_{\substack{e:\,Q_e\ne\emptyset\\Y_e(a)=Q_e}}
\ell(e),
\label{eq:fabo_release}
\end{equation}
where $\ell(e)$ is the length of support edge $e$.

Figure~\ref{fig:algo_demo} shows why released wire must be counted per edge.
After SALT step~1, most of the old trunk remains because the other six flows still use it; only $126{,}540$ of the $804{,}065$ support wire becomes unused.
After FABO step~2 moves all seven flows to the new spine, the old trunk has no users and can be removed entirely.
In this example, standard postprocessing performs the migration; the edge-level gain defined above is used to rank candidate breakpoints during refinement.
The gain $G(a)$ only ranks proposals; acceptance is decided on the complete tree, where the new root connections are also paid.

\subsection{Complete-Tree Construction and Acceptance}
\label{sec:fabo_full_tree}

\textbf{Evaluate the object that will be returned.}
For a breakpoint set $A$, FABO builds a rectilinear \emph{connector} $C(A)$ that spans the root and the selected breakpoints and reaches every breakpoint at its Manhattan distance from the root:
\begin{equation}
d_{C(A)}(r,a)=d_M(r,a),
\qquad \forall a\in A.
\label{eq:fabo_connector}
\end{equation}
Grafting every retained suffix $S_t$ under its breakpoint $a_t$ and removing redundant Steiner nodes yields the raw tree $T_{\mathrm{raw}}(A)$.

\begin{theorem}[Feasibility of a realized candidate]
\label{thm:fabo_feasible}
If every sink $t$ is assigned a breakpoint $a_t\in I_t$, the connector satisfies~\eqref{eq:fabo_connector}, and realization preserves every retained suffix and returns a tree, then $T_{\mathrm{raw}}(A)$ satisfies every budget in~\eqref{eq:salt_formulation}.
\end{theorem}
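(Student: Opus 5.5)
The plan is to bound, for each sink $t$, the tree path $d_{T_{\mathrm{raw}}}(r,t)$ by $\phi_t(a_t)$, and then use $a_t\in I_t$ to conclude $\phi_t(a_t)\le B_t$ directly from the definition in~\eqref{eq:fabo_interval}.

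\textbf{Step 1: exhibit a root-to-sink walk of length $\phi_t(a_t)$.} Before any merging, the union of the connector $C(A)$ and the grafted suffixes contains a walk from $r$ to $t$. It consists of the connector path from $r$ to $a_t$, of length $d_{C(A)}(r,a_t)=d_M(r,a_t)$ by~\eqref{eq:fabo_connector}, followed by the retained suffix $S_t=T^0[a_t,t]$, of length $s_t(a_t)$. Its total length is therefore $d_M(r,a_t)+s_t(a_t)=\phi_t(a_t)$ by~\eqref{eq:fabo_phi}.

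\textbf{Step 2: the realized tree path is no longer than this walk.} This is the step where the hypotheses ``realization preserves every retained suffix and returns a tree'' are used. Realization only identifies coincident segments (geometric union) and removes redundant Steiner nodes of degree at most two. Neither operation disconnects $r$ from $t$ or lengthens any curve.

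The walk from Step 1 therefore maps to a connected curve from $r$ to $t$ inside $T_{\mathrm{raw}}(A)$, with length at most $\phi_t(a_t)$. Because $T_{\mathrm{raw}}(A)$ is a tree, the unique $r$--$t$ path lies inside any connected subset joining $r$ to $t$, so
\[
d_{T_{\mathrm{raw}}}(r,t)\le \phi_t(a_t).
\]
The main obstacle is justifying that the image of the walk really lies in the returned tree. Deleting a degree-two Steiner node splices its two incident segments, so their length is preserved. Merging coincident wire can only shorten a walk, since shared portions are traversed once. The preservation hypothesis rules out realization dropping any part of $S_t$. I would state this as a short observation: every point of the connector and of every suffix remains in the geometric union $T_{\mathrm{raw}}(A)$. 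Once that is granted, the tree-path minimality argument is immediate.

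\textbf{Step 3: conclude.} Since $a_t\in I_t$, Lemma~\ref{lem:fabo_interval} and~\eqref{eq:fabo_interval} give $\phi_t(a_t)\le B_t$. Hence $d_{T_{\mathrm{raw}}}(r,t)\le B_t$ for every $t\in P$. The tree spans $r$ and every sink, because every sink lies on its own retained suffix. So $T_{\mathrm{raw}}(A)$ is feasible for~\eqref{eq:salt_formulation}.
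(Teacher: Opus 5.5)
Your proposal is correct and follows the same route as the paper. You exhibit the connector path from $r$ to $a_t$, of length $d_M(r,a_t)$ by~\eqref{eq:fabo_connector}, followed by $S_t$, bound the unique tree path by this walk's length $\phi_t(a_t)$, and conclude from $a_t\in I_t$; your Step~2 spells out what the paper asserts in one line, namely that the realized tree contains this walk, and in Step~3 the definition~\eqref{eq:fabo_interval} alone suffices without Lemma~\ref{lem:fabo_interval}.
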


\noindent\emph{Proof.}
For each sink $t$, the realized tree contains the connector path from $r$ to $a_t$ followed by the suffix $S_t$, and the tree path from $r$ to $t$ is no longer than this walk:
\begin{equation}
d_{T_{\mathrm{raw}}(A)}(r,t)
\le d_M(r,a_t)+s_t(a_t)
=\phi_t(a_t)\le B_t,
\label{eq:fabo_path_bound}
\end{equation}
where the last step uses $a_t\in I_t$. \hfill$\square$

\textbf{Realization and postprocessing.}
$\textsc{FullRealize}(A)$, used on finalist candidates, builds the connector with SALT's rectilinear Steiner arborescence construction~\cite{salt_iccad17}; $\textsc{FastRealize}(A)$, used while scanning proposals, builds it by merging overlapping shortest root paths.
Both constructions satisfy~\eqref{eq:fabo_connector}; suffix preservation and tree validity are enforced by construction and re-checked by the evaluator, and merging coincident wire only shortens root-to-sink paths, so the bound of Theorem~\ref{thm:fabo_feasible} survives.
Each realized tree is additionally passed through SALT's fixed postprocessing, which may re-attach a flow onto the connector when doing so shortens the tree without violating any budget; the raw and postprocessed forms are both evaluated, and $\textsc{FullRealize}$ and $\textsc{FastRealize}$ return the better valid form.
Each of these local moves relocates one attachment, so the passes do not assemble a long new trunk on their own---a coordinated migration would have to cross non-improving intermediate trees---but they readily exploit a trunk that exists: moving a budget-loose drop onto it is a single improving legal move.
The same postprocessing therefore stalls on SALT's tree and completes the consolidation on FABO's.

\textbf{Acceptance.}
The evaluator requires a connected, acyclic rooted tree that contains every pin, checks every root-to-sink budget, and measures exact wirelength.
Among valid trees, lower wirelength wins, and maximum stretch $\alpha$ breaks a wirelength tie; we write this order as $\prec_{\mathrm{lex}}$. The FABO lane of Figure~\ref{fig:algo_demo} separates these roles: step~1 selects $f_2$ and builds its shortest root connector, while step~2 moves the other six flows onto the resulting spine, leaving them on shortest root paths and deleting the old trunk. FABO step~3 of Figure~\ref{fig:algo_demo} shows the accepted result: every budget holds, the path of $t_2$ is $667{,}388$ against its budget of $667{,}388.75$, and wirelength is $929{,}794$ against SALT's $1{,}042{,}970$ under the same constraints.

\subsection{FABO Candidate Search}
\label{sec:fabo_algorithm}

\textbf{Two-stage search.}
Algorithm~\ref{alg:fabo} summarizes FABO on one support tree.
Stage one computes the intervals, builds the minimum cover, and realizes it; Theorem~\ref{thm:fabo_feasible} guarantees its budgets and the evaluator confirms them, so this stage already yields a valid tree.
Stage two proposes a bounded number of deeper breakpoints, ranks them by a score dominated by the flow-aware gain, commits a proposal only when its fast-realized tree is valid and shorter, and finally re-realizes the enlarged set in full.

\begin{algorithm}[!t]
\caption{\textbf{FABO on one light support tree.}}
\label{alg:fabo}
\begin{algorithmic}[1]
\REQUIRE Root $r$, sink set $P$, path budgets $B_t$, support tree $T^0$
\ENSURE A feasible rectilinear routing tree
\STATE Compute $f_t$ and $I_t=T^0[f_t,t]$ for every sink $t\in P$
\STATE $A\gets\emptyset$
\FOR{each interval with $f_t\ne r$, in nonincreasing $\delta_0(f_t)$}
  \IF{$A\cap I_t=\emptyset$}
    \STATE $A\gets A\cup\{f_t\}$
  \ENDIF
\ENDFOR
\STATE $A\gets A\cup\{r\}$
\STATE $T_\star\gets\textsc{FullRealize}(A)$;\;
       $T_{\mathrm{scan}}\gets\textsc{FastRealize}(A)$
\STATE Form a bounded pool $Q$ of deeper points from the feasible intervals
\FOR{a bounded number of refinement rounds}
  \STATE Rank every $a\in Q\setminus A$ by a fixed score dominated by $G(a)$
  \STATE Fast-realize and validate the leading proposals $A\cup\{a\}$
  \IF{no valid proposal improves $T_{\mathrm{scan}}$}
    \STATE \textbf{break}
  \ENDIF
  \STATE Commit the best proposal $a^\star$: $A\gets A\cup\{a^\star\}$; update $T_{\mathrm{scan}}$
\ENDFOR
\STATE $T'\gets\textsc{FullRealize}(A)$
\IF{$T'\prec_{\mathrm{lex}}T_\star$}
  \STATE $T_\star\gets T'$
\ENDIF
\RETURN $T_\star$
\end{algorithmic}
\end{algorithm}

\begingroup
\def\FABOResultFiguresOnly{1}
\ifdefined\FABOResultFiguresOnly
\begin{figure*}[!t]
\centering
\begin{subfigure}[t]{0.245\textwidth}
  \centering
  \includegraphics[width=\textwidth]{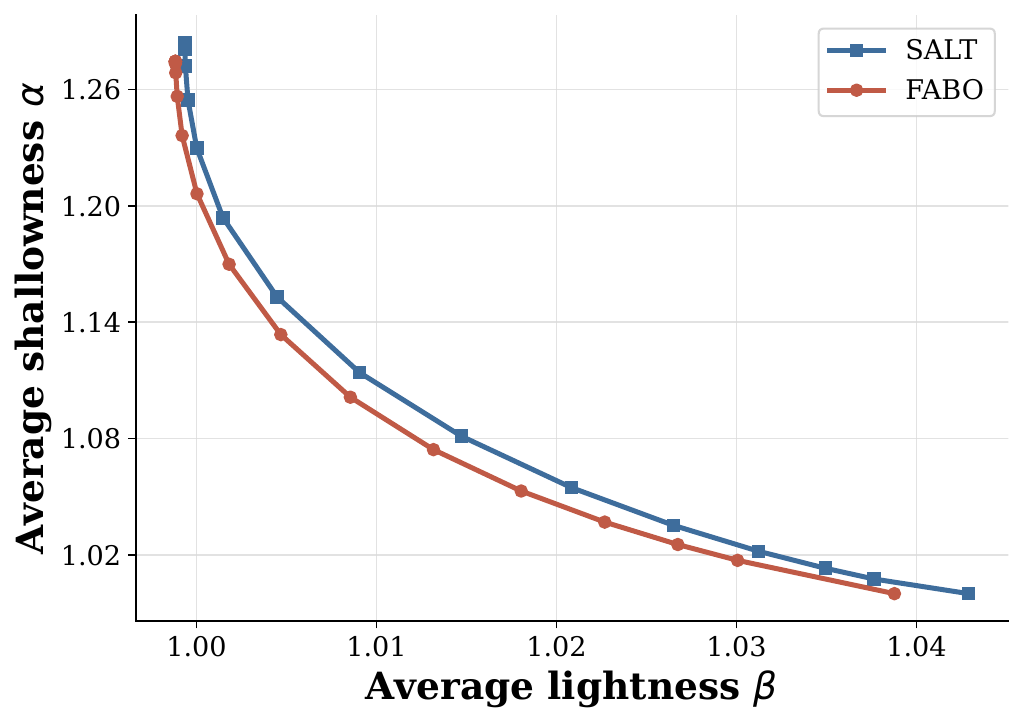}
  \caption{$\alpha$--$\beta$}
  \label{fig:fabo_salt_alpha_beta}
\end{subfigure}\hfill
\begin{subfigure}[t]{0.245\textwidth}
  \centering
  \includegraphics[width=\textwidth]{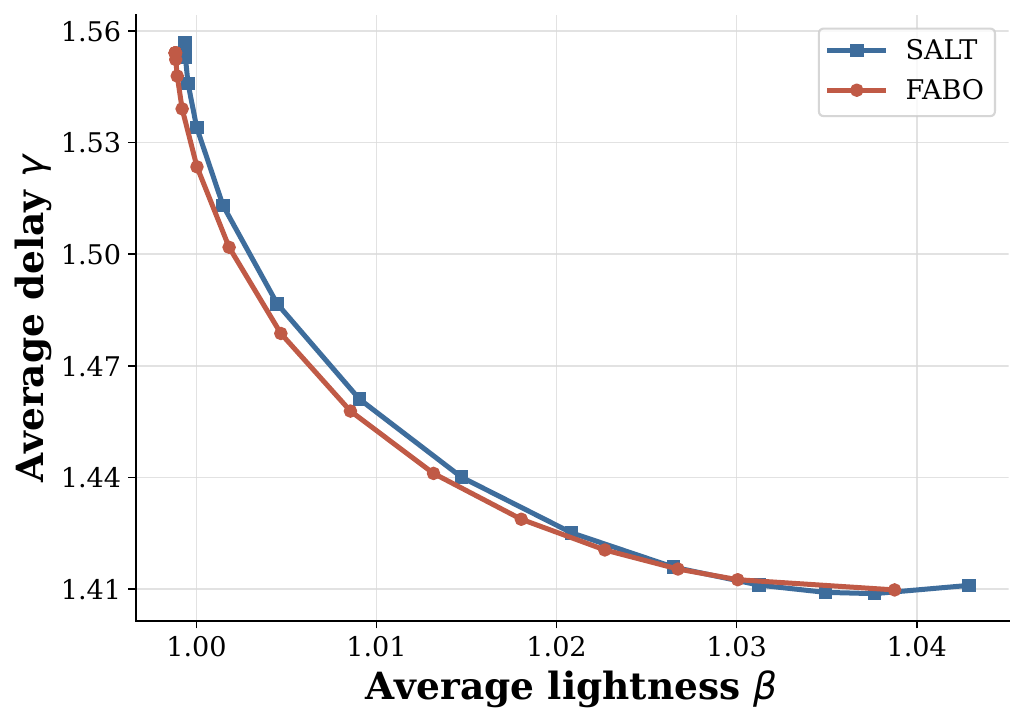}
  \caption{$\gamma$--$\beta$}
  \label{fig:fabo_salt_gamma_beta}
\end{subfigure}\hfill
\begin{subfigure}[t]{0.245\textwidth}
  \centering
  \includegraphics[width=\textwidth]{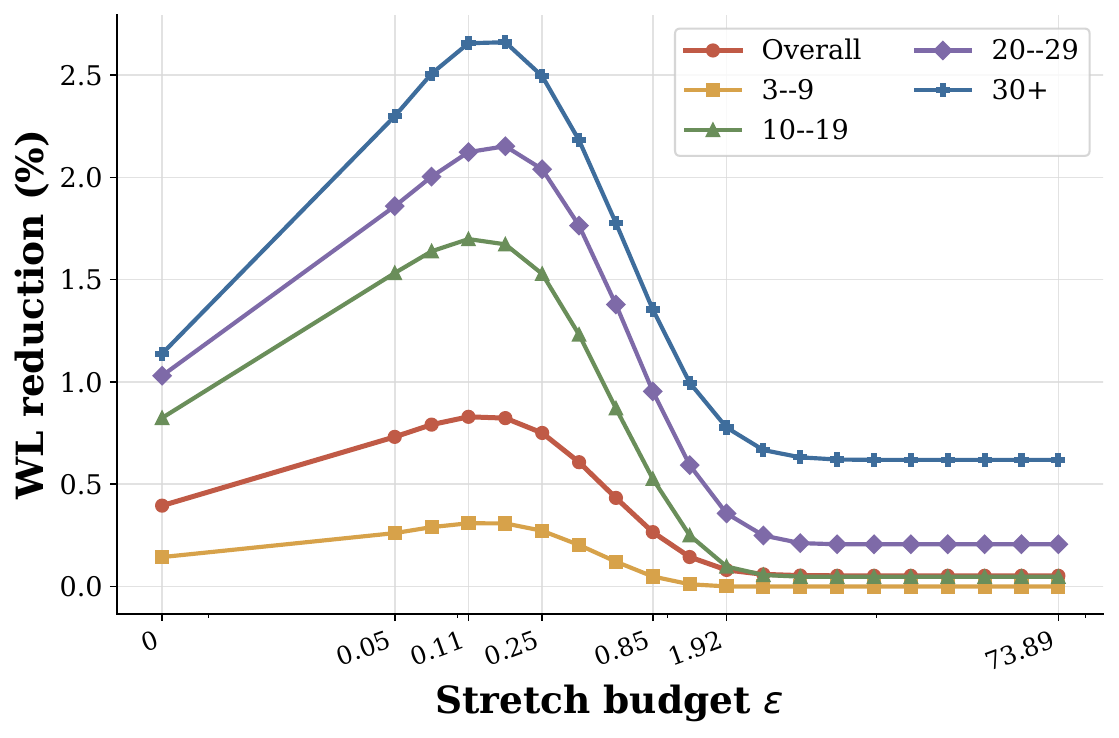}
  \caption{WL reduction vs.\ $\epsilon$}
  \label{fig:fabo_salt_wl_reduction_eps}
\end{subfigure}\hfill
\begin{subfigure}[t]{0.245\textwidth}
  \centering
  \includegraphics[width=\textwidth]{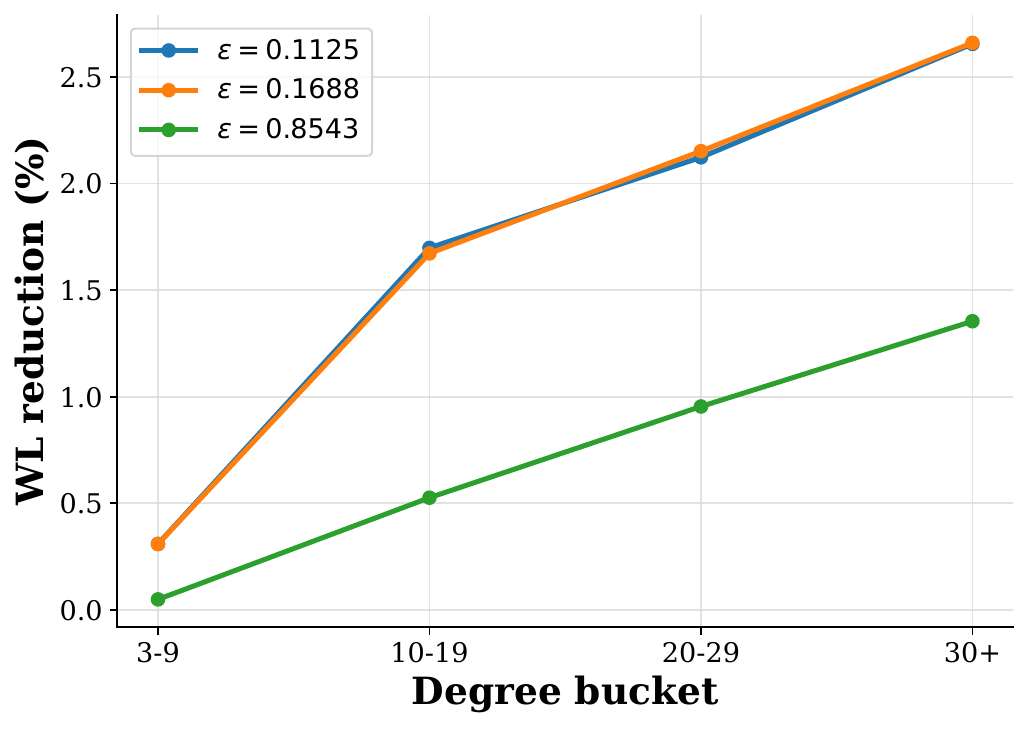}
  \caption{WL reduction vs.\ degree}
  \label{fig:fabo_salt_wl_reduction_degree}
\end{subfigure}
\caption{\textbf{FABO dominates SALT across all 20 stretch tolerances.}
Panels (a) and (b) show full-benchmark means at each $\epsilon$; FABO lies below and to the left of SALT in both plots.
Panel (c) reports FABO's full-benchmark wirelength reduction at each $\epsilon$.
Panel (d) reports the reduction within each pin-count range at the three shown tolerances.}
\label{fig:fabo_salt_main_results}
\end{figure*}

\begin{figure*}[!t]
\centering
\begin{subfigure}[t]{0.245\textwidth}
  \centering
  \includegraphics[width=\textwidth]{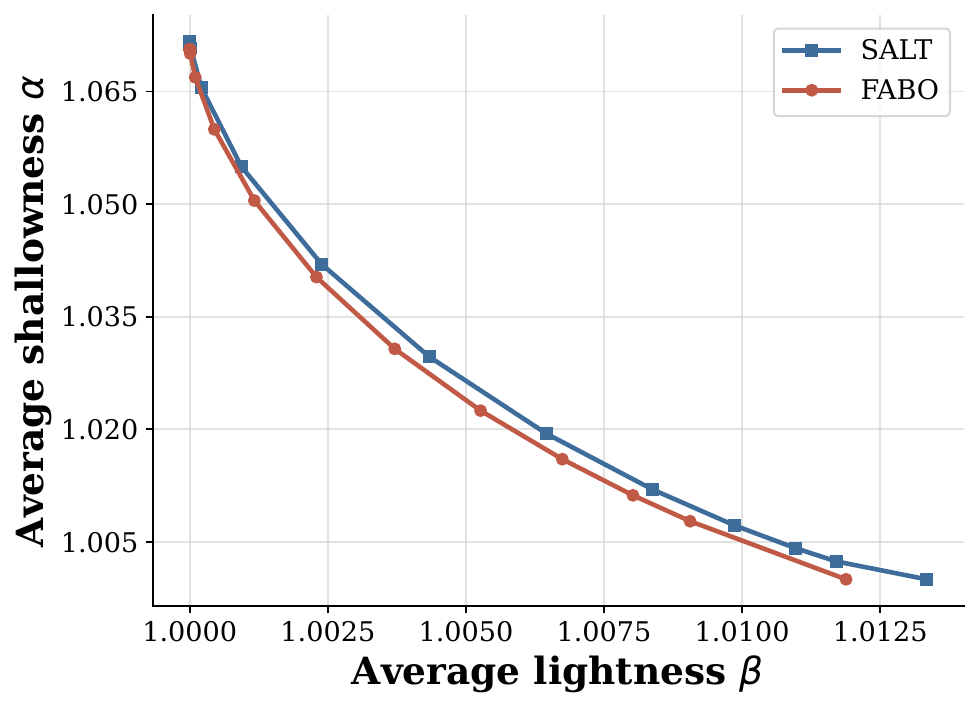}
  \caption{3--9 pins}
\end{subfigure}\hfill
\begin{subfigure}[t]{0.245\textwidth}
  \centering
  \includegraphics[width=\textwidth]{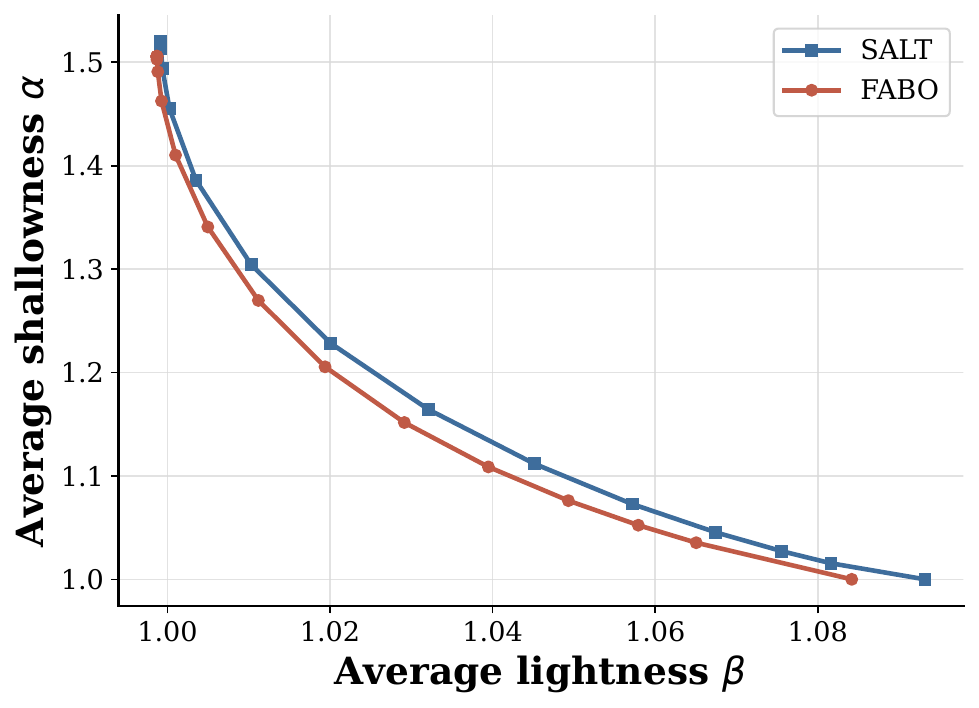}
  \caption{10--19 pins}
\end{subfigure}\hfill
\begin{subfigure}[t]{0.245\textwidth}
  \centering
  \includegraphics[width=\textwidth]{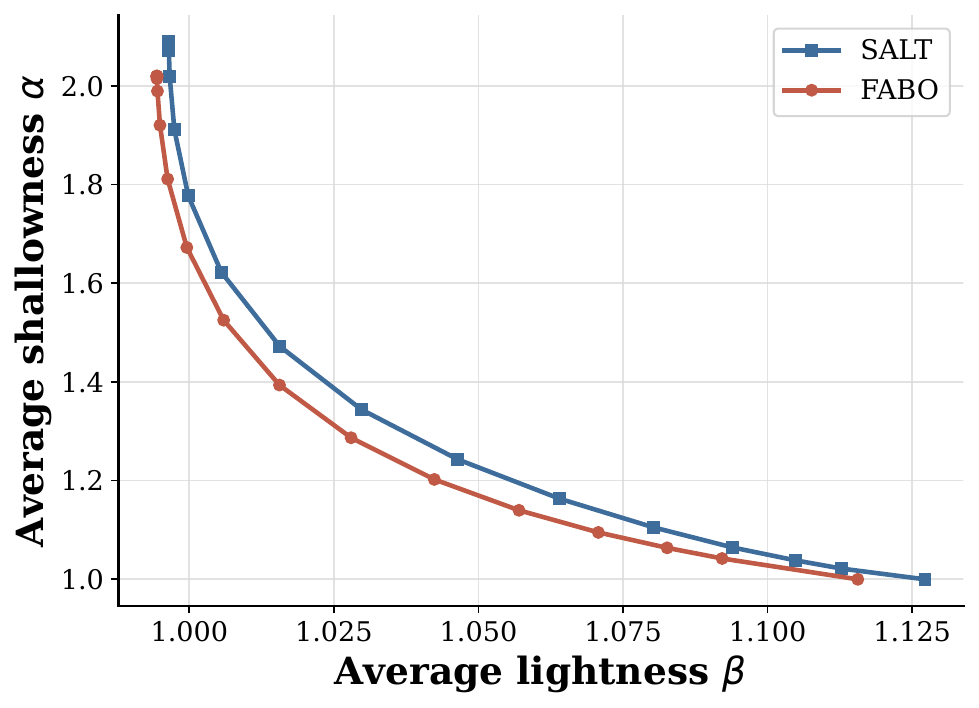}
  \caption{20--29 pins}
\end{subfigure}\hfill
\begin{subfigure}[t]{0.245\textwidth}
  \centering
  \includegraphics[width=\textwidth]{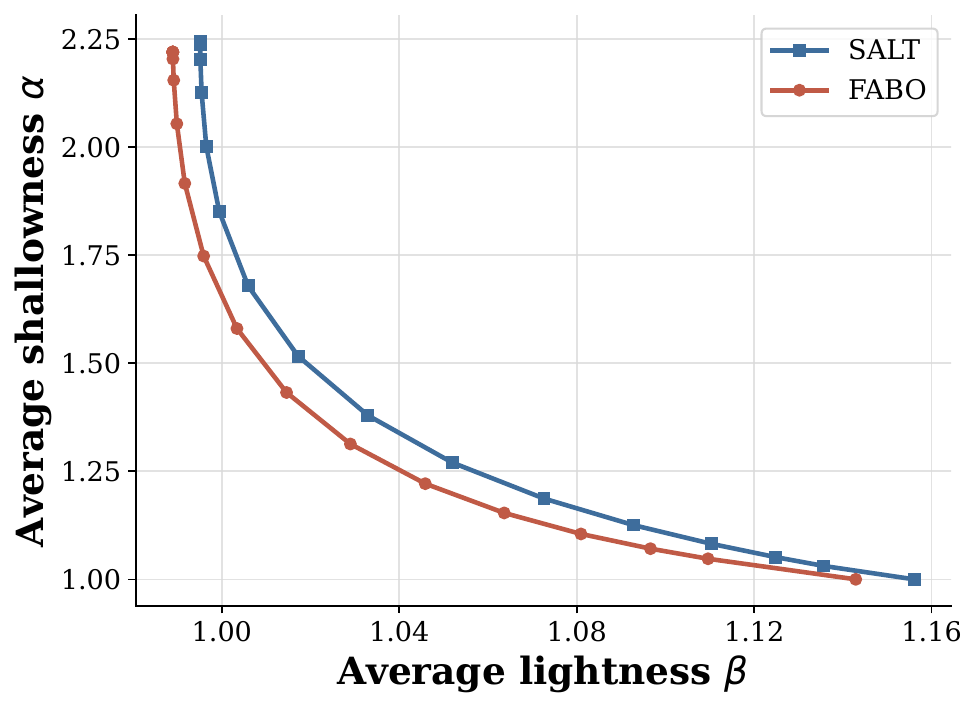}
  \caption{30+ pins}
\end{subfigure}
\caption{\textbf{FABO dominates SALT in every pin-count range, with a wider gap at higher fanout.}
Each point averages $\alpha$ and $\beta$ over all nets in the labeled pin-count range for one method and one $\epsilon$.}
\label{fig:fabo_salt_degree_buckets}
\end{figure*}

\begin{figure*}[!t]
\centering
\begin{subfigure}[t]{0.245\textwidth}
  \centering
  \includegraphics[width=\textwidth]{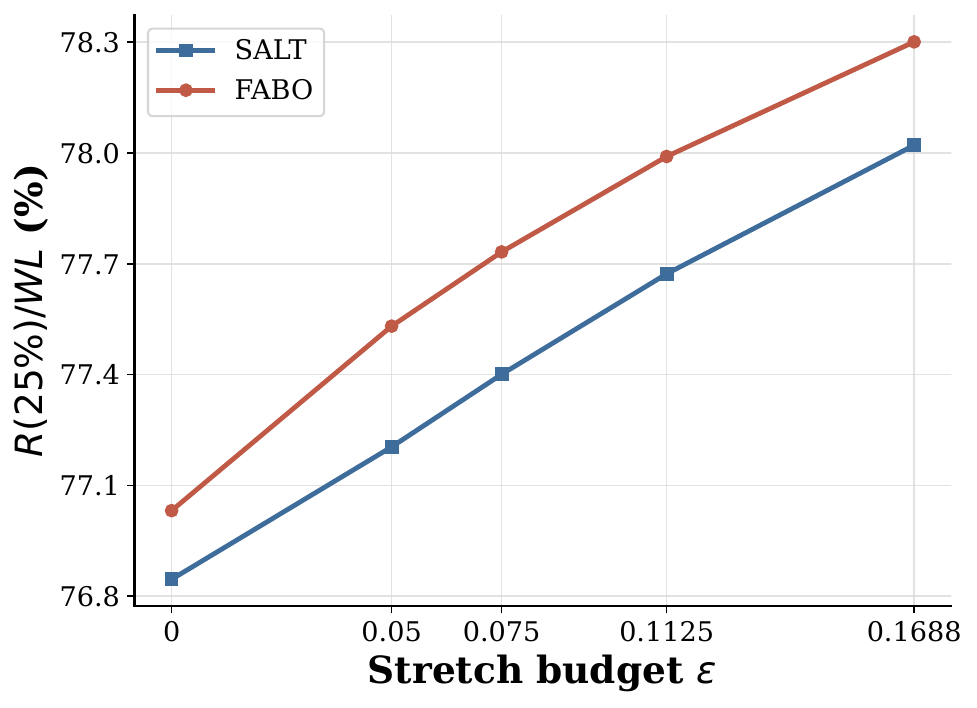}
  \caption{$R_T(25\%)/WL(T)$}
\end{subfigure}\hfill
\begin{subfigure}[t]{0.245\textwidth}
  \centering
  \includegraphics[width=\textwidth]{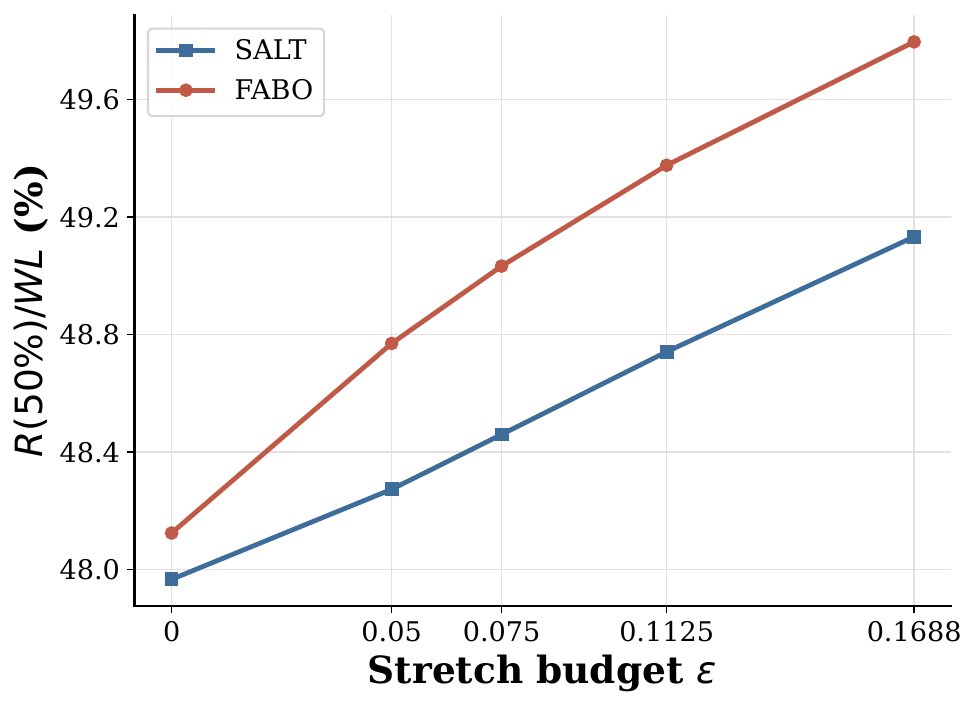}
  \caption{$R_T(50\%)/WL(T)$}
\end{subfigure}\hfill
\begin{subfigure}[t]{0.245\textwidth}
  \centering
  \includegraphics[width=\textwidth]{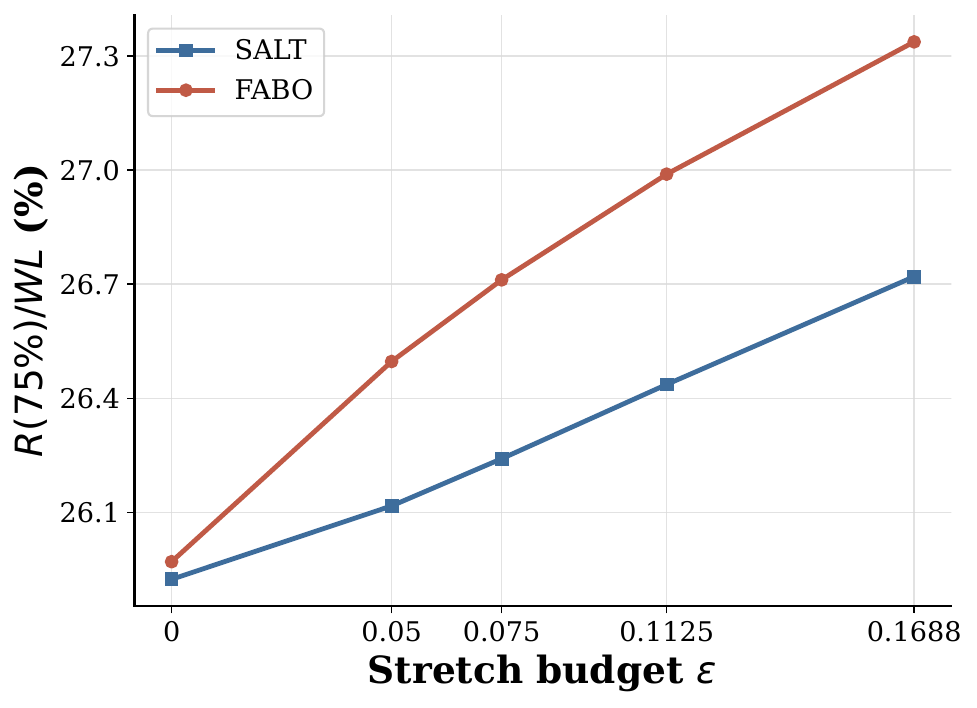}
  \caption{$R_T(75\%)/WL(T)$}
\end{subfigure}\hfill
\begin{subfigure}[t]{0.245\textwidth}
  \centering
  \includegraphics[width=\textwidth]{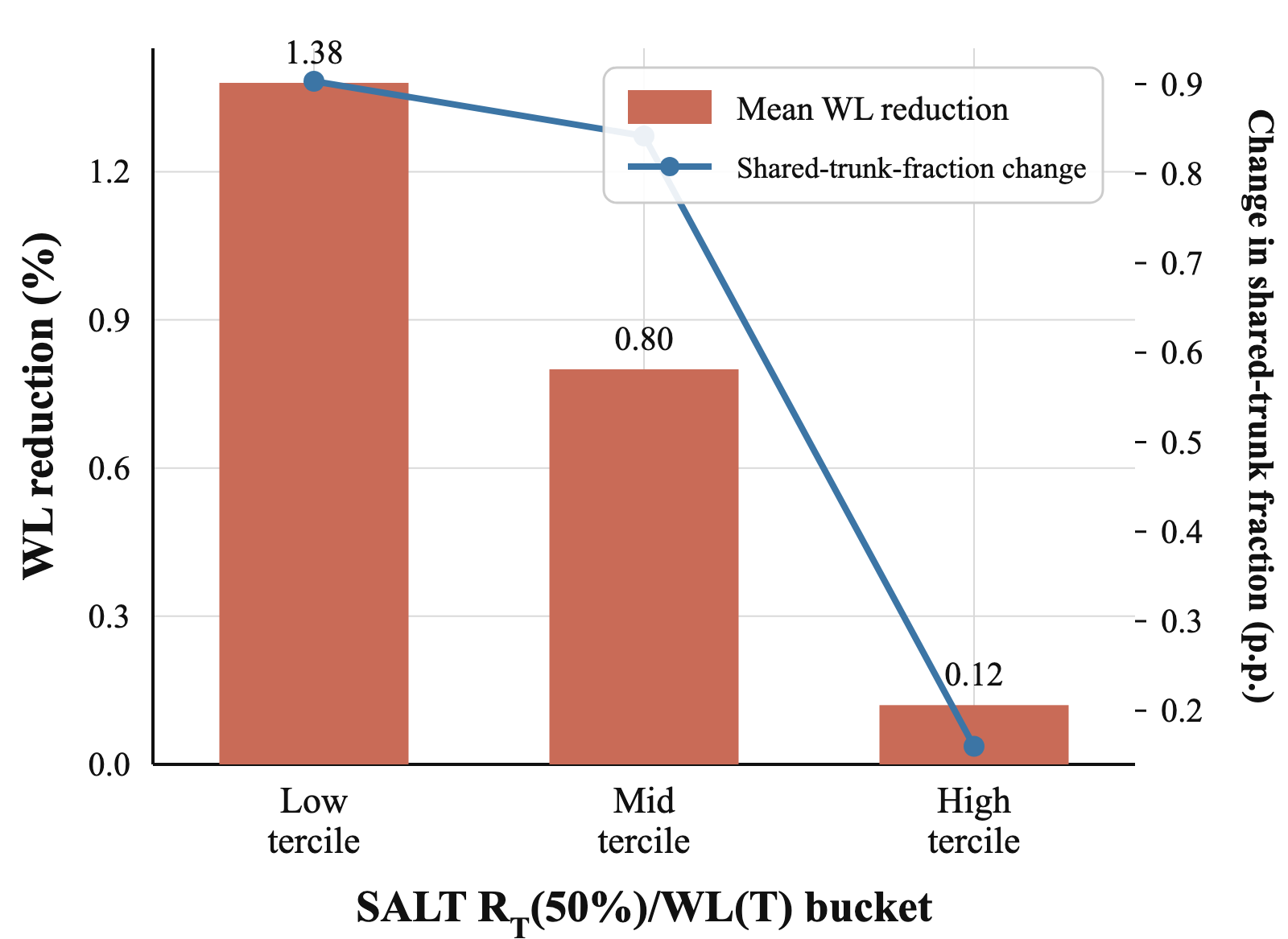}
  \caption{Grouped by SALT sharing}
\end{subfigure}
\caption{\textbf{FABO keeps a larger fraction of wirelength on shared root-side trunks.}
The metric $R_T(\rho)/WL(T)$ is the fraction of total wirelength in the root-connected trunk used by at least $\rho$ of the net's root-to-sink paths.
Panels (a)--(c) report the mean sharing metric over all 1.29 million nets at the five shown tolerances.
Panel (d) groups nets into terciles by SALT's $R_T(50\%)/WL(T)$ at $\epsilon=0.1125$. The bars show FABO's mean wirelength reduction relative to SALT; the line shows FABO's mean shared-trunk fraction minus SALT's.}
\label{fig:fabo_salt_structure_validation}
\end{figure*}

\begin{figure*}[!t]
\centering
\includegraphics[width=0.84\textwidth]{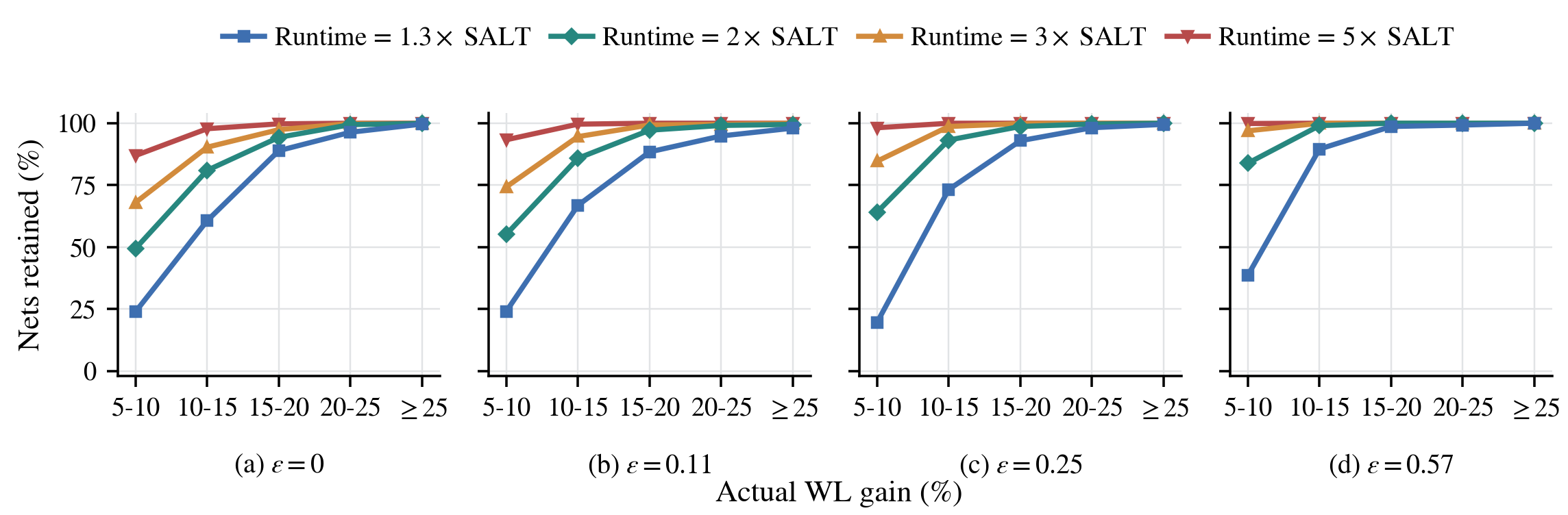}
\caption{\textbf{FABO-FAST recovers the high-gain nets first.}
The x-axis bins nets by the wirelength reduction obtained when FABO runs on every net. The y-axis is the fraction of each bin selected by FABO-FAST. Each curve represents a total runtime budget normalized by the all-net SALT runtime; the budget includes prediction and FABO calls on selected nets.
The four panels use $\epsilon=0$, 0.1125, 0.253125, and 0.56953125, with a separate leave-one-design-out predictor trained for each $\epsilon$.}
\label{fig:selective_fabo_budget_quality}
\end{figure*}
\else
\section{Results}
\label{sec:results}

\subsection{Experimental Setup}
\label{sec:result_setup}

\textbf{Benchmark and baseline.}
We evaluate FABO on the eight ICCAD15 Contest Problem B designs distributed by the official SALT repository flow~\cite{iccad15_contest,salt_iccad17}: \texttt{superblue1}, \texttt{superblue3}, \texttt{superblue4}, \texttt{superblue5}, \texttt{superblue7}, \texttt{superblue10}, \texttt{superblue16}, and \texttt{superblue18}.\footnote{The released benchmark differs from the counts reported in the SALT paper. It contains no 2-pin nets and 1.29M nets with at least three pins, whereas~\cite{salt_iccad17} reports 2.38M nontrivial nets. Most of the difference is in the 3--9-pin bucket. This discrepancy does not affect our comparison: FABO and the official SALT implementation use the same released files and the same evaluator.}
The benchmark contains 1,294,541 nets with at least three pins.
We use postprocessed SALT as the direct baseline because it targets the same objective: minimum wirelength under the same per-sink path-length limit. For every net and tolerance, FABO and SALT receive the same input and are evaluated by the same checker.
Because 3--9-pin nets make up 69.9\% of the benchmark, while only 58.3k nets have at least 30 pins, we report both full-benchmark and fanout-stratified results (Table~\ref{tab:iccad15_benchmark_stats}).

\begin{table}[t]
\caption{\textbf{Benchmark size and fanout distribution.}
Net degree is the number of pins; each entry counts nets directly from the eight ICCAD15 \texttt{.nets} files.}
\label{tab:iccad15_benchmark_stats}
\centering
\small
\resizebox{\columnwidth}{!}{%
\begin{tabular}{lrrrrrr}
\toprule
Design & 3--9 & 10--19 & 20--29 & 30--39 & $\geq 40$ & $\geq 3$ \\
& ($\times 10^3$) & ($\times 10^3$) & ($\times 10^3$) & ($\times 10^3$) & ($\times 10^3$) & ($\times 10^3$) \\
\midrule
superblue1 & 135 & 23.3 & 11.2 & 5.8 & 0.9 & 176.2 \\
superblue10 & 147.7 & 31.3 & 13.8 & 9.5 & 1.2 & 203.5 \\
superblue16 & 114 & 17.2 & 7.3 & 5.3 & 0.3 & 144.2 \\
superblue18 & 54.5 & 24.5 & 10.9 & 5.1 & 0.6 & 95.5 \\
superblue3 & 102.1 & 35 & 15.4 & 6.1 & 1.1 & 159.8 \\
superblue4 & 74.2 & 16.9 & 8.7 & 3.8 & 0.5 & 104 \\
superblue5 & 123.2 & 18.2 & 7.6 & 4.7 & 0.7 & 154.3 \\
superblue7 & 154.4 & 62.5 & 27.5 & 11 & 1.7 & 257.1 \\
\midrule
Total & 904.9 & 228.8 & 102.5 & 51.3 & 7 & 1294.5 \\
\bottomrule
\end{tabular}
}

\end{table}

\textbf{Metrics.}
The stretch tolerance $\epsilon$ limits every root-to-sink path to $(1+\epsilon)$ times its Manhattan distance.
We use three quality metrics. For each sink, path stretch is its root-to-sink path length divided by its Manhattan distance; shallowness $\alpha$ is the maximum of these ratios. FLUTE-normalized wirelength $\beta=WL(T)/WL(T_{\mathrm{FLUTE}})$ is total tree wirelength divided by the FLUTE wirelength for the same net. Delay $\gamma$ is the normalized maximum Elmore-delay metric used by SALT.
Lower is better for all three.
At fixed $\epsilon$, both algorithms satisfy the same per-sink limit, so lower mean wirelength means lower routing cost under the same feasibility constraint.

\subsection{FABO Dominates SALT at Every Constraint}
\label{sec:result_overall}

\textbf{Main result.}
At each of the 20 stretch-tolerance settings, FABO has lower mean wirelength than postprocessed SALT.
The largest full-benchmark reduction is 0.83\% at $\epsilon=0.1125$: mean FLUTE-normalized wirelength is 1.0312 for SALT and 1.0227 for FABO.

\textbf{Shape of the sweep.}
At the strictest constraint, $\epsilon=0$, both methods reach $\alpha=1$. SALT's $\beta$ is 1.0429, whereas FABO's is 1.0388.
The reduction peaks at 0.83\% once the constraint allows enough slack to move shared breakpoints jointly across several paths. At larger $\epsilon$, both methods approach the low-wirelength FLUTE topology, so FABO's advantage narrows but remains positive.

\textbf{Pareto dominance.}
At every $\epsilon$, FABO has lower mean $\beta$ without increasing mean $\alpha$ or $\gamma$; FABO therefore dominates SALT at all 20 operating points in both trade-off plots.
In the sampled sweep, FABO first reaches $\beta<1.001$ at $\alpha=1.2062$, whereas SALT first reaches it at $\alpha=1.2298$.
On the delay frontier, FABO first reaches $\beta<1.001$ at $\gamma=1.5235$, whereas SALT first reaches it at $\gamma=1.5340$.

\subsection{The Gain Grows with Fanout}
\label{sec:result_direct}
\label{sec:result_degree}

\textbf{Fanout trend.}
FABO dominates SALT in all four pin-count ranges, and its advantage grows with net degree.
For each $\epsilon$, we report FABO's percentage reduction in mean FLUTE-normalized wirelength relative to SALT.
Across the 20 tolerances, the maximum reduction is 0.31\% for 3--9 pins, 1.70\% for 10--19, 2.15\% for 20--29, and 2.66\% for at least 30 pins. At each of the three tolerances shown in Figure~\ref{fig:fabo_salt_main_results}(d), the reduction increases across the four pin-count ranges.
Because 3--9-pin nets make up 69.9\% of the benchmark, they dominate the full-benchmark average; this is why the overall peak of 0.83\% is much smaller than the high-fanout peaks.
Higher-fanout nets have more root-side edges used by several sink paths. SALT repairs each sink path separately, which can duplicate this shared wire; FABO instead moves shared breakpoints jointly across paths. The next subsection tests whether preserving shared root-side wire explains FABO's larger gains on high-fanout nets.

\subsection{Why FABO Saves Wire: A Larger Shared-Trunk Fraction}
\label{sec:result_mechanism}

\textbf{Mechanism test.}
If FABO's gain comes from preserving shared root-side wire, its trees should keep a larger fraction of their wirelength on heavily shared trunks.
For a threshold $\rho$, consider the root-connected trunk formed by edges used by at least $\rho$ of the net's root-to-sink paths. We denote its length by $R_T(\rho)$ and use $R_T(\rho)/WL(T)$ as the shared-trunk fraction.
We compute the sharing metric over all 1.29 million nets at the five shown tolerances.

For every shown pair of $\epsilon$ and $\rho$, FABO's mean shared-trunk fraction is higher than SALT's.
At $\epsilon=0.1125$ and $\rho=50\%$, SALT's mean shared-trunk fraction is 48.74\%, whereas FABO's is 49.38\%.
Panel (d) divides nets into terciles using SALT's $R_T(50\%)/WL(T)$. Relative to SALT, FABO reduces mean wirelength by 1.38\% in the low-sharing tercile, 0.80\% in the middle tercile, and 0.12\% in the high-sharing tercile. The largest reduction therefore occurs on nets for which SALT preserves the least shared wire.
Together, the fanout and trunk results support the proposed mechanism: FABO's trees allocate a larger fraction of their wirelength to shared root-side trunks, especially on the nets where SALT shares the least.

\subsection{FABO-FAST: Budget-Controlled FABO}
\label{sec:result_selective}

\textbf{Why FABO-FAST.}
Running FABO on every net gives the lowest mean wirelength, but at $\epsilon=0.1125$ the complete flow takes $10.66\times$ the all-net SALT runtime.
FABO-FAST first runs SALT on every net. A learned predictor ranks nets by their expected FABO gain, and FABO runs in that order until the total runtime reaches the budget. An unselected net retains its SALT tree. For a selected net, the final output is the shorter feasible tree produced by SALT and FABO.

\textbf{Gain and recovery.}
For net $i$, let $S_i$ and $F_i$ be the wirelengths of its feasible SALT and FABO trees. Because the final output uses the shorter tree, we define the realized percentage reduction as
\begin{equation}
g_i=100\frac{\max(S_i-F_i,0)}{S_i}.
\label{eq:selective_actual_gain}
\end{equation}
For a gain interval $I$, such as $[20\%,25\%)$, let $\mathcal{A}$ be the nets selected by FABO-FAST. We define recovery as the fraction of all nets whose full-FABO gain lies in $I$ that FABO-FAST selects:
\begin{equation}
Q_I(\mathcal{A})=100\frac{|\{i:i\in\mathcal{A},\ g_i\in I\}|}{|\{i:g_i\in I\}|}.
\label{eq:selective_interval_retention}
\end{equation}
The denominator is computed by running FABO on every net at the same $\epsilon$. Thus, $Q_I=100\%$ means FABO-FAST selects every net whose full-FABO gain falls in interval $I$.

\textbf{Predictor.}
The depth-4, 96-tree XGBoost predictor takes 22 features computed before FABO runs.
These features describe pin count, bounding-box geometry, FLUTE wirelength and path statistics, and SALT wirelength, stretch, and delay.
We use eight-fold leave-one-design-out evaluation. Each fold trains on seven designs and predicts all nets in the held-out design, so every net receives exactly one out-of-design prediction. We train a separate predictor for each plotted $\epsilon$.

\textbf{Runtime accounting.}
Let $T_{\mathrm{SALT}}$ be the time to run SALT on every net. The FABO-FAST runtime $T_{\mathrm{FAST}}$ includes $T_{\mathrm{SALT}}$, one batch prediction, and FABO calls on selected nets. We report the budget multiplier $b=T_{\mathrm{FAST}}/T_{\mathrm{SALT}}$.
At $\epsilon=0.1125$, SALT takes 62.43~s, running FABO on all nets adds 602.83~s, and batch prediction takes 0.872~s.

\textbf{Runtime--quality trade-off.}
At $b=1.3$, across the four plotted $\epsilon$ values, FABO-FAST recovers 99.0\%--100\% of nets with at least 25\% gain and 96.0\%--99.3\% of nets with 20\%--25\% gain. Recovery is lower for smaller gains: 88.5\%--99.2\% for the 15\%--20\% group and 21.0\%--45.3\% for the 5\%--10\% group. The monotonic drop in recovery as gain decreases shows that the predictor ranks the highest-gain nets first.
At $b=2$, recovery is 98.9\%--100\% for both the 20\%--25\% group and the at-least-25\% group. As $b$ increases, FABO-FAST selects more nets and approaches the result of running FABO on every net.

\textbf{Why high-gain nets are predictable.}
A large increase in wirelength from FLUTE to SALT, or a large change in path stretch, suggests that SALT made an expensive structural repair to satisfy the path constraint. Such nets give FABO's coordinated breakpoints the most opportunity to reduce wirelength.

% In a held-out ablation that adds feature groups cumulatively, the mean actual gain among the top 1\% of predicted nets is 2.76\% with pin count alone, 3.12\% after adding bounding-box geometry, 4.05\% after adding FLUTE path summaries, and 9.93\% after adding SALT summaries.
% The largest improvement comes from adding SALT-tree features.

% \textbf{Absolute scale.}
% At $\epsilon=0.1125$, full FABO finds 649 nets with at least 20\% gain, 202 with at least 25\%, and 52 with at least 30\%.
% At $1.04\times$ SALT time, FABO-FAST selects 511 of the 649 nets with at least 20\% gain and 50 of the 52 nets with at least 30\% gain. At $1.50\times$ SALT time, FABO-FAST selects 636 of 649 nets with at least 20\% gain, 201 of 202 with at least 25\% gain, and all 52 with at least 30\% gain.

% \textbf{Timing robustness.}
% For the selection counts in the previous paragraph, we estimate each net's runtime by averaging two to four repeated runs with the same configuration. Figure~\ref{fig:selective_fabo_budget_quality} instead uses the raw per-net runtime recorded in each sweep.
% To test whether rare slow runs drive the result, we cap per-net FABO runtime at the 99.9th percentile and recompute recovery. Across the four plotted $\epsilon$ values, recovery changes by at most 4.10 percentage points; across the three nonzero values, it changes by at most 0.49 percentage points. The actual runtime for every reported budget remains within $0.003\times T_{\mathrm{SALT}}$ of its target. Thus, the concentration of high-gain nets is not caused by a few runtime outliers.
\fi

\endgroup

\textbf{Search scope.}
Every bound in the search is a fixed constant, identical for all nets and tolerances and released with the code: the pool keeps a constant number of candidates, each round fast-realizes only the top-ranked proposals, and a round with little gain ends the search early.
The ranking is a fixed weighted score dominated by the released length $G(a)$, plus two lighter terms: one rewards the total depth by which reassigned flows move, and one discounts proposals far from every already selected breakpoint---a proxy for the connector wire the addition will need.
On nets with at least 31 pins, FABO repeats the whole search on a second, higher-accuracy FLUTE support tree and returns the better valid result.

\subsection{Guarantees and Scope}
\label{sec:fabo_guarantees}

\textbf{What is guaranteed.}
Lemma~\ref{lem:fabo_interval} gives exact feasible intervals; Theorem~\ref{thm:fabo_cover} minimizes only the number of initial non-root breakpoints; Theorem~\ref{thm:fabo_feasible} proves budget feasibility if the connector and realization satisfy their stated conditions; and Lemma~\ref{lem:fabo_release} makes edge-release accounting exact. These are properties of the construction, not the final acceptance guarantee.
In the implementation, a fixed evaluator rejects any incomplete or budget-violating candidate. The validated stage-one tree initializes the incumbent, which is replaced only by a valid candidate that improves $(WL,\alpha)$ under $\prec_{\mathrm{lex}}$.
Thus, the evaluator and incumbent-update rule ensure that FABO returns a feasible tree no worse than its validated stage-one tree.

\textbf{What remains heuristic.}
The minimum cover minimizes only the number of initial non-root breakpoints; it does not guarantee minimum final wirelength. Likewise, the gain $G(a)$ measures released support wire, not the net wirelength change after adding the connector.
Candidate ranking, connector construction, support-tree choice, and postprocessing are therefore heuristics; complete-tree evaluation is the only acceptance rule.

\section{Conclusion}

We developed an agent-guided framework that combines parallel language-model exploration with independent checking to discover and validate routing-tree algorithms.
Applied to SALT, it discovered that sink-by-sink repair cannot coordinate breakpoints across paths sharing root-side wire, causing premature splitting and duplicated wire.
This finding led to FABO, which represents each path's feasible breakpoint choices as an interval and jointly selects shared breakpoints while preserving every sink's stretch budget.
Across 1.29 million nets, FABO reduces average FLUTE-normalized wirelength at every setting, with peak reductions of $0.83\%$ overall and $2.66\%$ for nets with at least 30 pins; its advantage grows with fanout, and its trees retain more shared root-side wire.
FABO-FAST makes this improvement practical by identifying and optimizing most high-gain nets within $1.3\times$ SALT's runtime.

% --- Bibliography ---
\bibliographystyle{IEEEtran}
\bibliography{refs}

% Liu Shang
\begin{IEEEbiography}[{\includegraphics[width=1in,height=1.25in, clip,keepaspectratio]{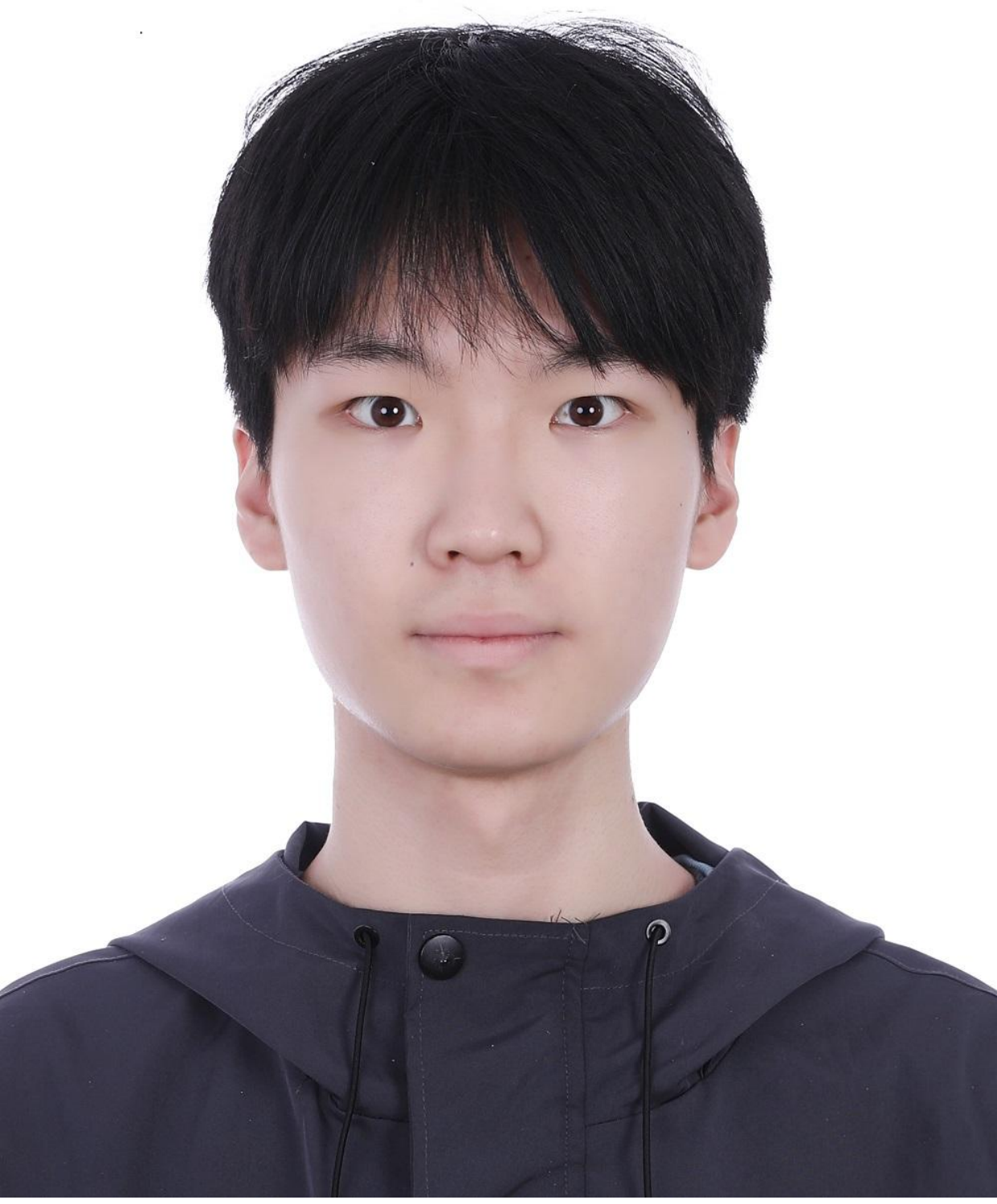}}]{Shang Liu}
received the B.E. degree in Automation Science and Electrical Engineering from Beihang University, Beijing, China, in 2023. He is currently pursuing the Ph.D. degree with the Department of Electronic and Computer Engineering, Hong Kong University of Science and Technology, Hong Kong. His research interests include agile VLSI design methodologies and Artificial Intelligence.
\end{IEEEbiography}

\begin{IEEEbiography}[{\includegraphics[width=1in,height=1.25in, clip,keepaspectratio]{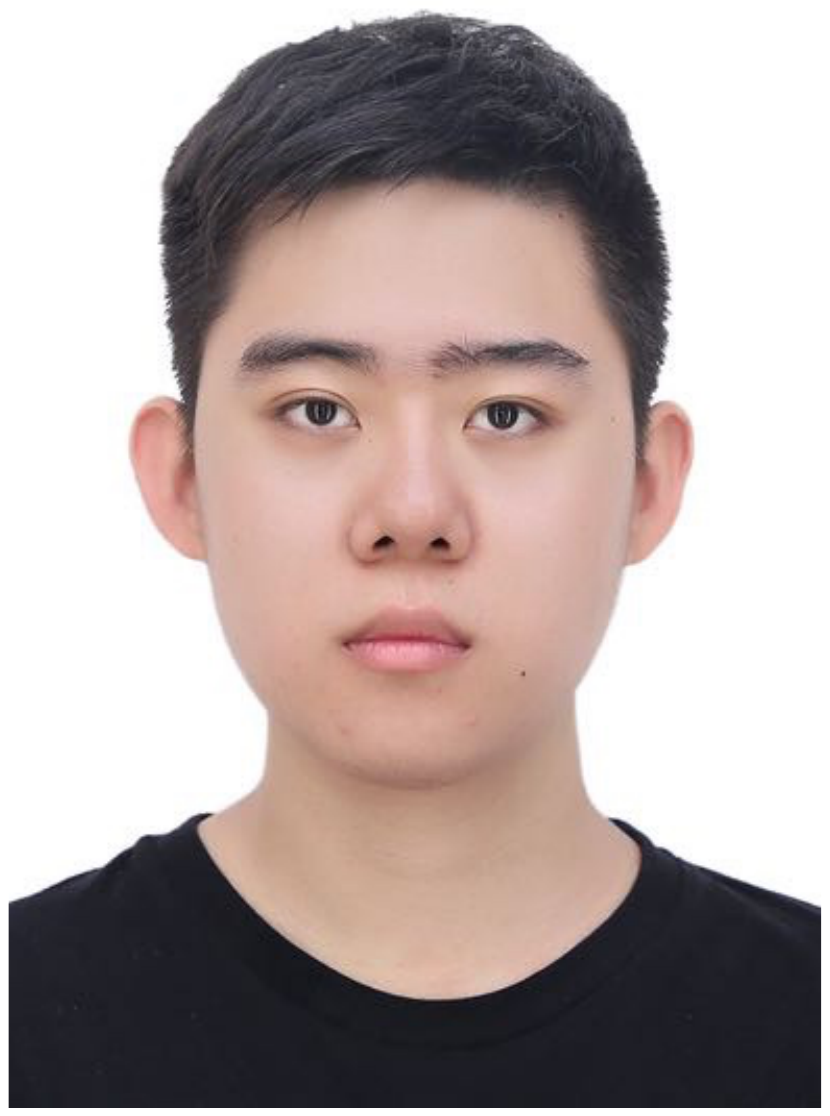}}]{Wenji Fang} is currently a Ph.D. student with the Department of Electronic and Computer Engineering at the Hong Kong University of Science and Technology. He received his M.Phil. degree in Microelectronics from the Hong Kong University of Science and Technology (Guangzhou) in 2024, and his B.Eng. degree from Nanjing University of Aeronautics and Astronautics in 2021. 
His research interests include Electronic Design Automation (EDA) and VLSI design verification.
\end{IEEEbiography}

\begin{IEEEbiography}[{\includegraphics[width=1in,height=1.25in, clip,keepaspectratio]{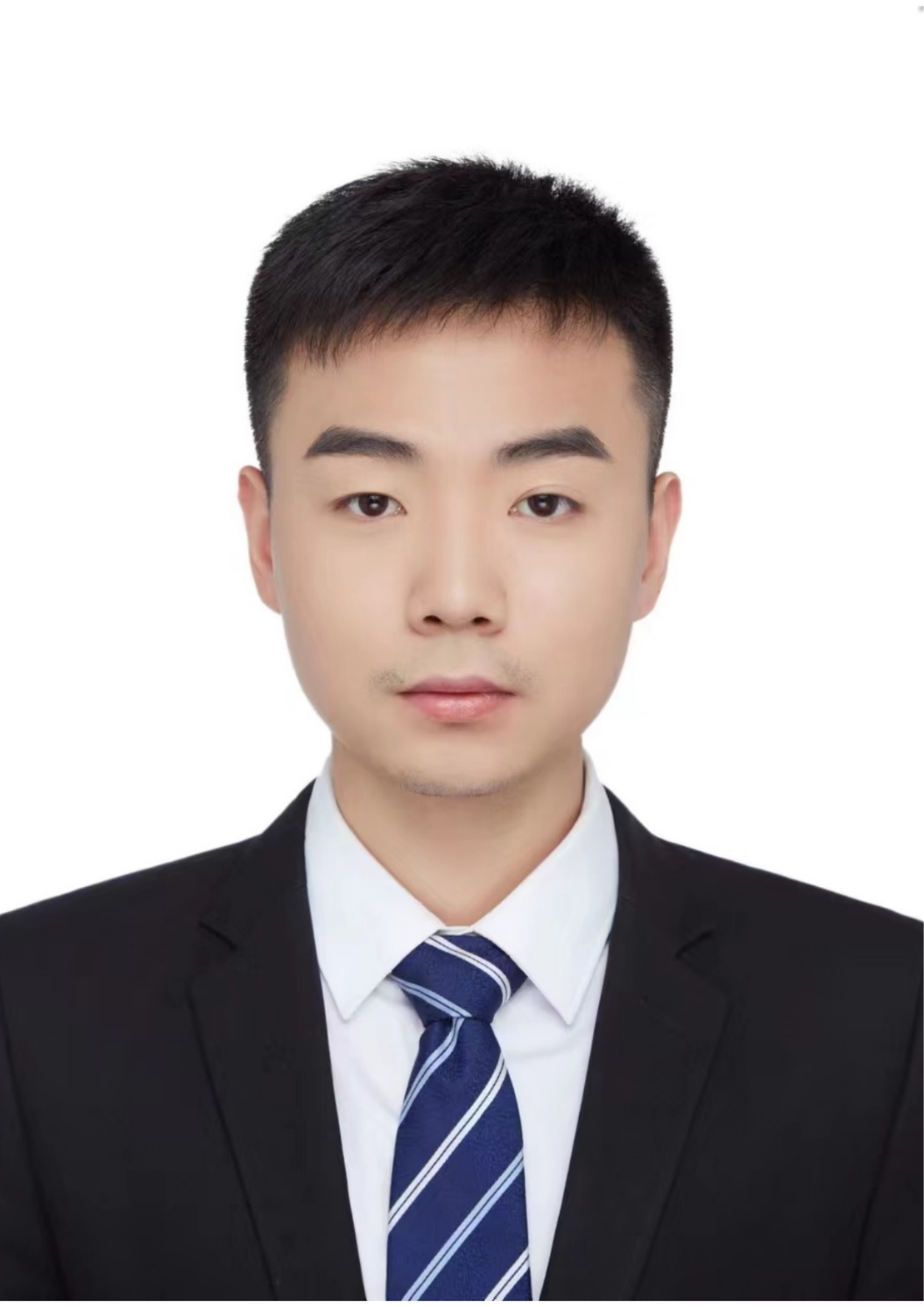}}]{Jing Wang}
received the B.S. degree in Electrical Information Engineering from Peking University, Beijing, China, in 2022, and the master's degree in Artificial Intelligence from the Department of Statistics and Actuarial Science, The University of Hong Kong, China, in 2023. He is currently pursuing the Ph.D. degree with the Department of Electronic and Computer Engineering, Hong Kong University of Science and Technology, Hong Kong. His research interests include agile VLSI design methodologies and Artificial Intelligence.
\end{IEEEbiography}

% \begin{IEEEbiography}[{\includegraphics[width=1in,height=1.25in, clip,keepaspectratio]{photo/qijun}}]{Qijun Zhang}
% received the B.Eng. degree from Tongji University, Shanghai, China, in 2022. He is currently a Ph.D. student in the Department of Electronic and Computer Engineering (ECE) at the Hong Kong University of Science and Technology (HKUST). His research interests include Computer Architecture and Electronics Design Automation.
% \end{IEEEbiography}

% \begin{IEEEbiography}[{\includegraphics[width=1in,height=1.25in,clip,keepaspectratio]{photo/hongcezhang}}]{Hongce Zhang}
% (Member, IEEE) received the B.S. degree
% in microelectronics from Shanghai Jiao Tong
% University, Shanghai, China, in 2015, and the
% Ph.D. degree from the Electrical and Computer
% Engineering Department of Princeton University, NJ, USA, in 2021.

% He is currently an Assistant Professor with the Microelectronics Thrust, Function Hub of Hong Kong University of Science and Technology (Guangzhou), Guangzhou, China, and is also affiliated with the Electronic and Computer Engineering Department of the Hong Kong University of Science and Technology, Clear Water Bay, Hong Kong SAR.
%  His research interests
% include formal verification and hardware model checking.
% \end{IEEEbiography}

\begin{IEEEbiography}[{\includegraphics[width=1in,height=1.25in,clip,keepaspectratio]{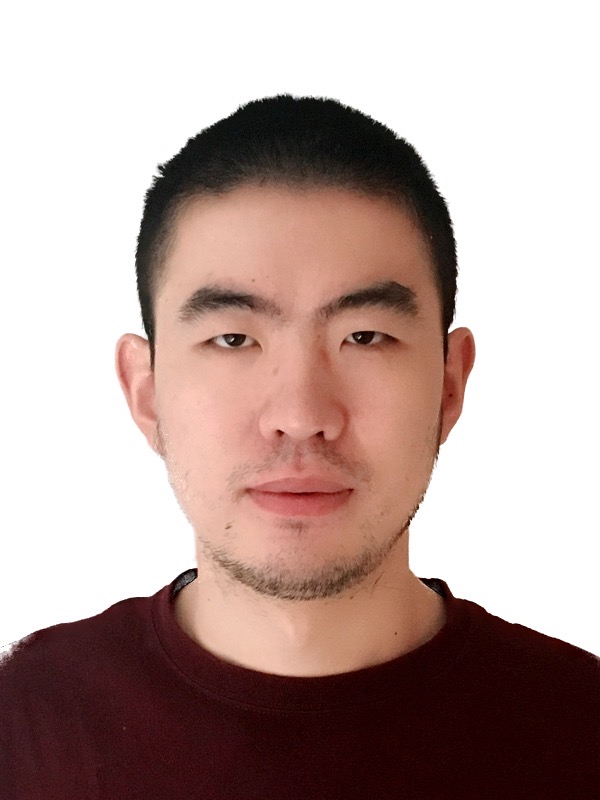}}]{Hongxin Kong} received the B.S. degree in electrical engineering from Nanjing University of Aeronautics and Astronautics, Nanjing, China, in 2016, and the Ph.D. degree in computer engineering from Texas A\&M University, College Station, TX, USA, in 2020. He is currently a senior staff R\&D engineer with Synopsys, Inc. Prior to joining Synopsys, he was with Advanced Micro Devices (AMD) and Cadence. His research interests include electronic design automation (EDA), GPU-accelerated physical design algorithms, place and route (P\&R), and static timing analysis.
\end{IEEEbiography}

% Lu Yao
\begin{IEEEbiography}[{\includegraphics[width=1in,height=1.25in, clip,keepaspectratio]{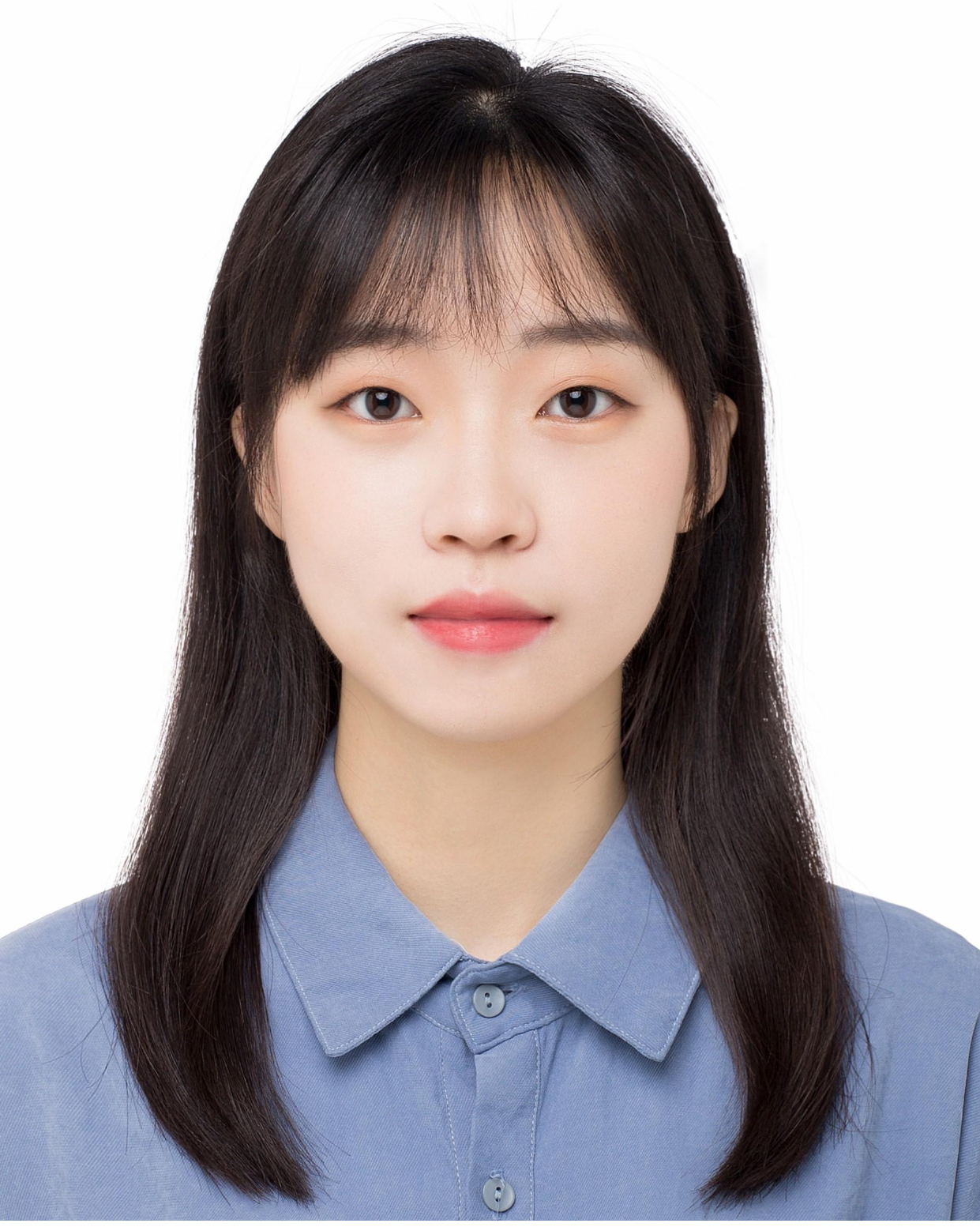}}]{Yao Lu}
received the B.E. degree from the School of Electronic Science and Engineering, Southeast University, Nanjing, China, in 2020, and the master's degree from the School of Microelectronics, Fudan University, Shanghai, China, in 2023. She is currently pursuing the Ph.D. degree with the Department of Electronic and Computer Engineering, The Hong Kong University of Science and Technology, Hong Kong. Her current research interests focus on machine learning applications in EDA.
\end{IEEEbiography}

\begin{IEEEbiography}[{\includegraphics[width=1in,height=1.25in,clip,keepaspectratio]{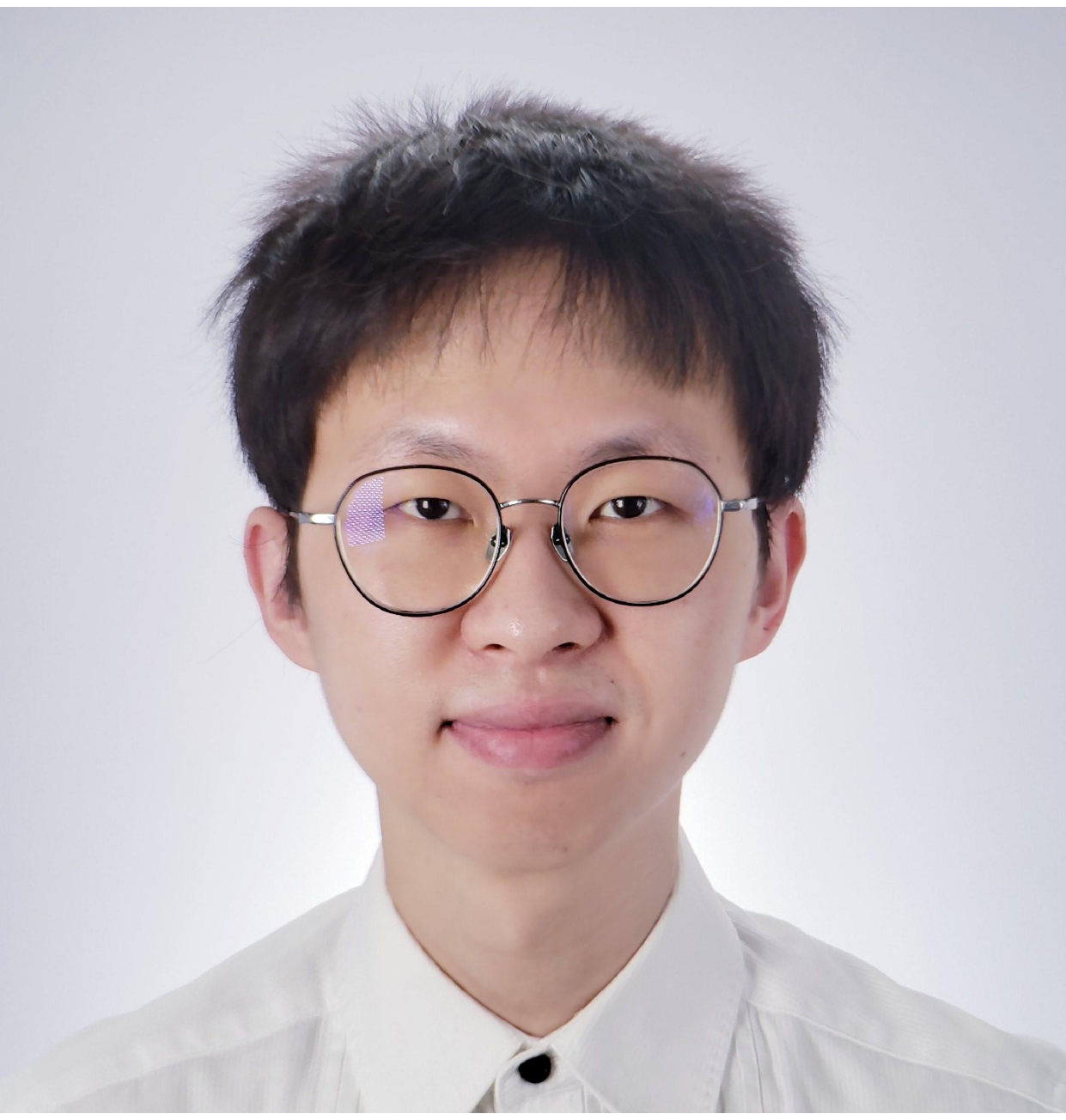}}]{Zhiyao Xie}
is an Assistant Professor in the Department of Electronic and Computer Engineering (ECE) at the Hong Kong University of Science and Technology (HKUST). He received his Ph.D. degree from Duke University in 2022 and the B.Eng. degree from City University of Hong Kong in 2017. His research focuses on AI-driven techniques for EDA and VLSI design. He has received multiple prestigious awards, including the ACM SIGDA Outstanding New Faculty Award 2026, ACM ASPLOS 2026 Best Paper Award, IEEE/ACM MICRO 2021 Best Paper Award, ASP-DAC 2023 Best Paper Award, ACM SIGDA SRF Best Research Poster Award 2022, ACM Outstanding Ph.D. Dissertation Award in EDA 2023, EDAA Outstanding Ph.D.  Dissertation Award 2023, and Hong Kong Research Grants Council (RGC) Early Career Award 2023.
\end{IEEEbiography}
\end{document}